\DeclareMathOperator{\mad}{mad}
\DeclareMathOperator{\ad}{ad}
\title{Partitioning sparse graphs into an independent set and a forest of bounded degree}
\author[a]{François Dross}
\author[a]{Mickael Montassier}
\author[a,b]{Alexandre Pinlou}
\affil[a]{{\small Université de Montpellier, CNRS, LIRMM}} 
\affil[b]{{\small Université Paul-Valéry Montpellier 3, CNRS, LIRMM\medskip}} \affil[ ]{{\small 161 rue Ada, 34095 Montpellier
    Cedex 5, France}} \affil[
]{\href{mailto:francois.dross@lirmm.fr,mickael.montassier@lirmm.fr,alexandre.pinlou@lirmm.fr}{\small{\{francois.dross,mickael.montassier,alexandre.pinlou\}@lirmm.fr}}}
\begin{document}

\newtheorem{theo}{Theorem}
\newtheorem*{theo*}{Theorem}
\newtheorem{cor}[theo]{Corollary}
\newtheorem{lemm}[theo]{Lemma}
\newtheorem{prop}[theo]{Property}
\newtheorem{obs}[theo]{Observation}
\newtheorem{conj}[theo]{Conjecture}
\newtheorem{claim}[theo]{Claim}
\newtheorem{config}[theo]{Configuration}
\newtheorem{quest}[theo]{Question}

\maketitle
\begin{abstract}
An $({\cal I},{\cal F}_d)$-partition of a graph is a partition of the vertices of the graph into two sets $I$ and $F$, such that $I$ is an independent set and $F$ induces a forest of maximum degree at most $d$.
We show that for all $M<3$ and $d \ge \frac{2}{3-M} - 2$, if a graph has maximum average degree less than $M$, then it has an $({\cal I},{\cal F}_d)$-partition. Additionally, we prove that for all $\frac{8}{3} \le M < 3$ and $d \ge \frac{1}{3-M}$, if a graph has maximum average degree less than $M$ then it has an $({\cal I},{\cal F}_d)$-partition.
\end{abstract} 

\section{Introduction}

In this paper, unless we specify otherwise, all the graph considered are simple graphs, without loops or multi-edges.

For $i$ classes of graphs ${\cal G}_1,\ldots, {\cal G}_i$, a
\emph{$({\cal G}_1,\ldots, {\cal G}_i)$-partition} of a graph $G$ is a partition of the vertices of $G$ into $i$ sets $V_1,\ldots,V_i$ such that, for all $1\le j \le i$, the graph $G[V_j]$ induced by $V_j$ belongs to ${\cal G}_j$. 

In the following we will consider the following classes of
graphs:
\begin{itemize}
\item ${\cal F}$ the class of forests,
\item ${\cal F}_d$ the class of forests with maximum degree at most
  $d$,
\item ${\Delta}_d$ the class of graphs with maximum degree at most $d$,
\item ${\cal I}$ the class of empty graphs (i.e. graphs with no edges).
\end{itemize}
For example, an $({\cal I},{\cal F},{\Delta}_2)$-partition of $G$ is a
vertex-partition into three sets $V_1,V_2,V_3$ such that $G[V_1]$ is
an empty graph, $G[V_2]$ is a forest, and $G[V_3]$ is a graph with maximum degree at most $2$.
Note that ${\Delta}_0 = {\cal F}_0 = I$ and $\Delta_1 = {\cal F}_1$.

The \emph{average degree} of a graph $G$ with $n$ vertices and $m$ edges, denoted by $\ad(G)$, is equal to $\frac{2m}{n}$. The \emph{maximum average degree} of a graph $G$, denoted by $\mad(G)$, is the maximum of $\ad(H)$ over all subgraphs $H$ of $G$. The \emph{girth} of a graph $G$ is the length of a smallest cycle in $G$, and infinity if $G$ has no cycle.

Many results on partitions of sparse graphs appear in the literature, where a graph is said to be sparse if it has a low maximum average degree, or if it is planar and has a large girth. The study of partitions of sparse graphs started with the Four Colour Theorem \cite{appel1,appel2}, which states that every planar graph admits an $({\cal I},{\cal I},{\cal I},{\cal I})$-partition. Borodin~\cite{Borodin} proved that every planar graph admits an $({\cal I}, {\cal F}, {\cal F})$-partition, and Borodin
and Glebov~\cite{borodin2001partition} proved that every planar graph
with girth at least $5$ admits an $({\cal I}, {\cal F})$-partition. Poh~\cite{Poh} proved that every planar graph admits an $({\cal F}_2,{\cal F}_2,{\cal F}_2)$-partition.

More recently, Borodin and Kostochka~\cite{borodin2014defective} showed that for all $j\ge 0$ and $k \ge 2j+2$, every graph $G$ with $\mad(G) < 2 \left(2-\frac{k+2}{(j+2)(k+1)} \right)$ admits a $({\Delta}_j,{\Delta}_k)$-partition. In particular, every graph $G$ with $\mad(G) < \frac{8}{3}$ admits an $(I,{\Delta}_2)$-partition, and every graph $G$ with $\mad(G)<\frac{14}{5}$ admits an $(I,{\Delta}_4)$-partition. With Euler's formula, this yields that planar graphs with girth at least $7$ admit $(I,{\Delta}_4)$-partitions, and that planar graphs with girth at least $8$ admit $(I,{\Delta}_2)$-partitions. Borodin and Kostochka~\cite{borodin2011vertex} proved that every graph $G$ with $\mad(G) < \frac{12}{5}$ admits an $(I,{\Delta}_1)$-partition, which implies that that every planar graph with girth at least $12$ admits an $(I,{\Delta}_1)$-partition.
This last result was improved by Kim, Kostochka and Zhu~\cite{kim2014improper}, who proved that every triangle-free graph with maximum average degree at most $\frac{11}{9}$ admits an $(I,{\Delta}_1)$-partition, and thus that every planar graph with girth at least $11$ admits an $(I,{\Delta}_1)$-partition.
In contrast with these results, Borodin, Ivanova, Montassier, Ochem and Raspaud~\cite{borodin2010vertex} proved that for every $d$, there exists a planar graph of girth at least $6$ that admits no $(I,{\Delta}_d)$-partition.

It can be interesting to find partitions of sparse graphs into a forest and a forest of bounded degree, that is $(I,{\cal F}_d)$-partitions. Note that if a graph admits an $(I,{\cal F}_d)$-partition, then it admits an $(I,\Delta_d)$-partition, and that an $(I,{\cal F}_1)$-partition is the same as an $(I,\Delta_1)$-partition. Therefore the previous results imply that:
\begin{itemize}
\item for every $d$, there exists a planar graph of girth at least $6$ that admits no $(I,{\Delta}_d)$-partition;
\item every planar graph with girth at least $11$ admits an $(I,{\cal F}_1)$-partition.
\end{itemize}

Here are the main results of our paper:

\begin{theo} \label{t-main}
Let $M$ be a real number such that $M < 3$. Let $d \ge 0$ be an integer, and let $G$ be a graph with $\mad(G) < M$. If $d \ge \frac{2}{3-M} - 2$, then $G$ admits an $({\cal I},{\cal F}_d)$-partition.
\end{theo}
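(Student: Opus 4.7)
The plan is a minimum-counterexample argument combined with discharging. Set the threshold $M^\star = \frac{3d+4}{d+2}$; the hypothesis $d \geq \frac{2}{3-M}-2$ rewrites as $M \leq M^\star$, and since a larger $M$ makes the conclusion no easier it suffices to prove the theorem for $M=M^\star$. Let $G$ be a counterexample to this case with $|V(G)|$ minimum, so that every proper subgraph of $G$ admits an $(\mathcal{I},\mathcal{F}_d)$-partition.

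The first task is to build a catalogue of \emph{reducible configurations} that cannot appear in $G$. A warm-up shows $\delta(G)\geq 2$: given $v$ with $d(v)\le 1$, take a partition $(I,F)$ of $G-v$; if $d(v)=0$ put $v\in I$, and if $v$ has a unique neighbor $u$, place $v$ in whichever of $I$ or $F$ is not forbidden by $u$'s class (with $v\in I$ available as a fallback when $u\in F$ has $F$-degree already equal to $d$). The substantial reducible configurations will concern degree-$2$ vertices: long chains of degree-$2$ vertices, and low-degree vertices with too many degree-$2$ neighbors. For each such configuration $C$ I would delete the offending vertices (possibly adding a shortcut edge between their outer endpoints to carry their adjacency information), invoke minimality to obtain an $(\mathcal{I},\mathcal{F}_d)$-partition of the smaller graph, and re-insert the vertices of $C$ by assigning each to $I$ or $F$ in a consistent way. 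The chain-length threshold is chosen so the discharging step closes.

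For discharging, assign each vertex $v$ the initial charge $\omega(v) = d(v)-M^\star$, so $\sum_v \omega(v) = 2|E(G)|-M^\star|V(G)| < 0$ since $\mad(G)<M^\star$. The rules transfer charge from vertices of degree $\geq 3$, each carrying surplus $\geq 3-M^\star = \tfrac{2}{d+2}$, to nearby degree-$2$ vertices, each with deficit $M^\star - 2 = \tfrac{d}{d+2}$, typically along edges or along maximal threads of consecutive degree-$2$ vertices. Using the absence of the reducible configurations, I would argue vertex-by-vertex that every final charge is non-negative, contradicting the negative total.

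The main obstacle is the reducibility of degree-$2$ chains. Unlike the analogous $(\mathcal{I},\Delta_d)$-partition problem, here the $F$-part must be a \emph{forest}, so a chain of degree-$2$ vertices running between two anchor vertices $u,u'\in F$ cannot simply be dumped wholesale into $F$: one would either create a $u$--$u'$ cycle in $F$, or raise the $F$-degree of $u$ or $u'$ beyond $d$. At least one interior vertex of the chain must therefore land in $I$, and propagating the resulting parity and $F$-degree constraints simultaneously from both ends forces the chain to be short enough that a consistent placement exists. Matching this length threshold exactly to the discharging ratio $\tfrac{3-M^\star}{M^\star-2} = \tfrac{2}{d}$ is what produces the bound $d\geq\tfrac{2}{3-M}-2$, and performing this matching correctly is the delicate part of the proof.
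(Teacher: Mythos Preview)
Your overall architecture (minimum counterexample plus discharging with initial charge $d(v)-M$) matches the paper, but the technical content you propose for the two halves is different from what the paper does, and the part you flag as ``the delicate part'' is precisely where your sketch has a real gap.

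The paper does \emph{not} work with threads of $2$-vertices. Its key device for the forest constraint is a relaxation: a \emph{good $d$-partition} is a pair $(I,F)$ where $I$ is independent, $G[F]$ has maximum degree $d$, and every cycle of $G[F]$ passes through a vertex of degree $2$ in $G$. Any good $d$-partition can be cleaned up to an $(\mathcal{I},\mathcal{F}_d)$-partition afterwards by repeatedly moving such degree-$2$ vertices to $I$. This lets all the reduction arguments ignore the acyclicity of $F$ entirely. With that in hand, the crucial reducible configuration is local, not a chain bound: every $2$-vertex must have a $(d+2)^+$-neighbour (a ``big'' vertex). The discharging is then organised via a rooted ``light forest'' $L$ on the $2$-vertices and big vertices; a big vertex gives $1-\epsilon$ to each $2$-son in $L$, nothing to its father, and $\tfrac{1-\epsilon}{2}$ to its remaining $2$-neighbours, and one checks a big $k$-vertex cannot have $k$ sons.

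Your proposal, by contrast, never says how the forest constraint is actually handled during reductions; you identify it as the obstacle and then assert that ``matching the chain-length threshold to the discharging ratio $\tfrac{2}{d}$'' is what makes it work. But $\tfrac{2}{d}<1$ for $d\ge 3$, so this is not a chain-length threshold, and you give no concrete reducible configuration, no statement of which chains are forbidden, and no argument for re-inserting a deleted chain without either creating an $F$-cycle or exceeding degree $d$ at an anchor already saturated in $F$. Something along these lines might be made to work, but as written the proposal stops exactly at the point where the real idea is needed; the paper's ``good $d$-partition'' trick and its big-neighbour lemma are what fill that gap.
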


\begin{theo} \label{t-main2}
Let $M$ be a real number such that $\frac{8}{3} \le M < 3$. Let $d \ge 0$ be an integer, and let $G$ be a graph with $\mad(G) < M$.
If $d \ge \frac{1}{3-M}$, then $G$ admits an $({\cal I},{\cal F}_d)$-partition.
\end{theo}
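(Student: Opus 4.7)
The plan is a minimum counterexample argument combined with the discharging method. Let $G$ be a graph with $\mad(G) < M$ and no $(\mathcal{I},\mathcal{F}_d)$-partition, minimizing $|V(G)|+|E(G)|$; then every proper subgraph of $G$ admits such a partition. First I would establish a list of local configurations that $G$ cannot contain. Easy ones include $\delta(G) \ge 2$ (a degree-$1$ vertex $v$ can be added to either side of a partition of $G - v$, depending on its neighbour's class). The heart of this step is a reducibility lemma for chains of degree-$2$ vertices joining two higher-degree (``branch'') vertices: if the chain is long enough, one can remove an internal segment, apply the minimality hypothesis to the smaller graph, and extend the resulting $(\mathcal{I},\mathcal{F}_d)$-partition by carefully $2$-colouring the remaining path to match the prescribed classes at the endpoints. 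The flexibility comes from the fact that an internal degree-$2$ vertex placed in $\mathcal{F}$ has forest-degree at most $2 \le d$, so only the boundary vertices carry forest and degree-$d$ obligations. To obtain the sharper bound of Theorem~\ref{t-main2} over Theorem~\ref{t-main}, stronger reducibility is needed: I would also show that a $3$-vertex together with several adjacent short chains, whose combined length is large enough, can be reduced as a unit.

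Next comes the discharging. Assign initial charge $\omega(v) = d(v) - M$, so $\sum_v \omega(v) = 2|E(G)| - M|V(G)| < 0$ by the hypothesis on $\mad(G)$. The rule is that each branch vertex sends, along each incident chain of degree-$2$ vertices, enough charge for the chain-vertices to receive $M-2$ each in total, splitting this cost evenly with the branch vertex at the other end of the chain. Using the reducibility lemmas from the previous step, I would verify that each vertex ends with nonnegative charge: every degree-$2$ vertex is exactly compensated, a $3$-vertex has enough charge $3-M \ge 1/d$ to feed its bounded total adjacent chain length, and degree-$\ge 4$ vertices are straightforward since their surplus is at least $1$. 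Summing gives $0 \le \sum_v \omega^*(v) = \sum_v \omega(v) < 0$, a contradiction.

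The main obstacle will be the reducibility of $3$-vertices surrounded by multiple short chains of degree-$2$ vertices, which is where Theorem~\ref{t-main2} goes beyond Theorem~\ref{t-main}. For $M \ge \tfrac{8}{3}$ the charge budget of a $3$-vertex ($3-M \le \tfrac{1}{3}$) is tight compared with the deficit $M - 2 \ge \tfrac{2}{3}$ of each degree-$2$ neighbour, so only very restricted local structures are permitted by the discharging, and the corresponding reducibility statements must be fine-grained. Proving them requires case analysis on the classes of all boundary vertices simultaneously and a joint check that the resulting $\mathcal{F}$-subgraph is a forest of maximum degree at most $d$; the hypothesis $d \ge \tfrac{1}{3-M}$ is exactly what yields enough slack in the degree-$d$ condition to make the extension possible in every case.
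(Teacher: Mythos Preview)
Your outline has the right high-level shape (minimum counterexample plus discharging), but it misses the two structural ideas that actually drive the paper's proof, and as written the plan would not go through.

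\medskip

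\textbf{The ``branch vertex'' threshold is wrong.} You treat every $3^+$-vertex as a branch vertex that pays for adjacent chains of $2$-vertices. But the basic thread-reducibility step you describe already fails at the endpoints: if you delete a chain $u\!-\!v_1\!-\!\cdots\!-\!v_\ell\!-\!w$ and the partition of the smaller graph puts, say, $u\in I$ and $w\in F$ with $w$ already having $d$ neighbours in $F$, you cannot place $v_\ell$ anywhere. The paper's key observation (Lemma~\ref{2v}) is that the correct dichotomy is between \emph{small} vertices (degree $\le d+1$) and \emph{big} vertices (degree $\ge d+2$): a $2$-vertex with two small neighbours is reducible, so every $2$-vertex has a big neighbour, and it is the big vertices that pay. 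Your discharging then cannot balance: for $M$ close to $3$ a $3$-vertex has charge $3-M$ but you ask it to cover chain deficits of order $M-2\approx 1$, which forces the untrue claim that a $3$-vertex with any $2$-neighbour is reducible.

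\medskip

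\textbf{The acyclicity constraint is global, and so is the sharpening over Theorem~\ref{t-main}.} You propose to handle the forest condition by local case analysis at boundary vertices, but cycles in $G[F]$ are a global obstruction. The paper sidesteps this by working with \emph{good $d$-partitions} (allowing $F$-cycles through degree-$2$ vertices, which can be broken at the end) and by building a rooted \emph{light forest} $L$ recording how big vertices, leaves, and ``buds'' of small $3^+$-vertices hang together. The passage from the bound $d\ge \frac{2}{3-M}-2$ of Theorem~\ref{t-main} to $d\ge \frac{1}{3-M}$ is \emph{not} obtained by finer local reductions around $3$-vertices as you suggest; it comes from a global counting argument on $L$: one applies the Frank--Gy\'arf\'as orientation theorem to bound the number of internal $2$-vertices of $L$ by $2\sum_{v\text{ big}}(d_G(v)-d-1)$ (Lemmas~\ref{omega}--\ref{intern}), and then routes the residual charge through a common pot. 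Without this orientation lemma and the light-forest bookkeeping, the discharging does not close at the stated threshold.
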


By a direct application of Euler's formula, every planar graph with girth at least $g$ has maximum average degree less than $\frac{2g}{g-2}$. That yields the following corollary:

\begin{cor} \label{c-main}
Let $G$ be a planar graph with girth at least $g$.

\begin{enumerate}
\hypertarget{3.1}{\item If $g \ge 7$, then $G$ admits an $({\cal I},{\cal F}_5)$-partition. \label{c_1}}

\hypertarget{3.2}{\item If $g \ge 8$, then $G$ admits an $({\cal I},{\cal F}_3)$-partition. \label{c_2}}

\hypertarget{3.3}{\item If $g \ge 10$, then $G$ admits an $({\cal I},{\cal F}_2)$-partition. \label{c_3}}
\end{enumerate}
\end{cor}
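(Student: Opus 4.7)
The plan is to apply Theorems \ref{t-main} and \ref{t-main2} after translating the girth assumption into a bound on the maximum average degree via Euler's formula. The standard Euler argument (used in the paper just above the corollary statement) gives that every planar graph $G$ of girth at least $g$ has $\mad(G) < \frac{2g}{g-2}$, so the first step is to record the three relevant values: $M_7 = 14/5$, $M_8 = 8/3$, $M_{10} = 5/2$.

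Next, for each item I would pick whichever of the two main theorems produces the smaller admissible $d$. For $g \ge 10$ one has $M_{10} = 5/2 < 8/3$, so Theorem \ref{t-main2} does not apply and I would use Theorem \ref{t-main}: the condition $d \ge \frac{2}{3-M}-2$ becomes $d \ge \frac{2}{1/2} - 2 = 2$, proving item \ref{c_3}. For $g \ge 8$ and $g \ge 7$ the values $M_8 = 8/3$ and $M_7 = 14/5$ both lie in $[8/3, 3)$, so Theorem \ref{t-main2} is available and sharper; the condition $d \ge \frac{1}{3-M}$ gives $d \ge 3$ when $M = 8/3$ and $d \ge 5$ when $M = 14/5$, yielding items \ref{c_2} and \ref{c_1} respectively.

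There is essentially no obstacle: once Theorems \ref{t-main} and \ref{t-main2} are proven, the corollary reduces to three arithmetic checks and a choice between the two theorems based on whether $\mad$ falls above or below $8/3$. The only mild subtlety is making sure the strict inequality $\mad(G) < \frac{2g}{g-2}$ matches the strict inequality $\mad(G) < M$ required by both theorems (including the boundary case $M = 8/3$ at $g = 8$), which it does.
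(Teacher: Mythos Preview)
Your argument is correct and is essentially the intended one: Euler's formula gives $\mad(G)<\frac{2g}{g-2}$, and then one plugs the resulting $M$ into whichever of the two main theorems yields the smaller $d$. Your arithmetic checks are all right.

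One point worth recording: the paper's own one-line attribution just after the corollary appears to have items~\ref{c_1} and~\ref{c_3} swapped. As you correctly observe, for $g\ge 10$ one has $M=\tfrac{5}{2}<\tfrac{8}{3}$, so Theorem~\ref{t-main2} is not applicable and item~\ref{c_3} must come from Theorem~\ref{t-main} (which gives $d\ge 2$); conversely, for $g\ge 7$ one has $M=\tfrac{14}{5}$ and Theorem~\ref{t-main} would only yield $d\ge 8$, so item~\ref{c_1} must come from Theorem~\ref{t-main2} (which gives $d\ge 5$). Your assignment of theorems to items is the correct one.
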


Corollaries~\hyperlink{3.2}{3.2} and~\hyperlink{3.3}{3.3} are obtained from Theorem~\ref{t-main2}, whereas Corollary~\hyperlink{3.1}{3.1} is obtained from Theorem~\ref{t-main}.
See Table~\ref{tab1} for an overview of the results on vertex partitions of planar graphs presented above.

\begin{table}
  \centering
  \begin{center}
    \begin{tabular}{|l|l|l|}
      \hline
      Classes & Vertex-partitions & References \\ \hline
      \multirow{3}{*}{Planar graphs} & $({\cal I},{\cal I},{\cal I},{\cal I})$ & The Four
                                                                                 Color Theorem \cite{appel1,appel2}\\
              & $({\cal I},{\cal F},{\cal F})$& Borodin \cite{Borodin} \\
              & $({\cal F}_2,{\cal F}_2,{\cal F}_2)$& Poh \cite{Poh} \\ \hline
      \multirow{1}{*}{Planar graphs with girth 5} & $({\cal I},{\cal F})$& Borodin and Glebov \cite{borodin2001partition} \\ \hline
      \multirow{1}{*}{Planar graphs with girth 6} & no $({\cal I},\Delta_d)$& Borodin et al. \cite{borodin2010vertex} \\ \hline
      \multirow{2}{*}{Planar graphs with girth 7} & $({\cal I},\Delta_4)$& Borodin and Kostochka~\cite{borodin2014defective} \\ 
      & $({\cal I},{\cal F}_5)$ & Present paper \\ \hline
      \multirow{2}{*}{Planar graphs with girth 8} & $({\cal I},\Delta_2)$& Borodin and Kostochka~\cite{borodin2014defective} \\ 
      & $({\cal I},{\cal F}_3)$ & Present paper \\ \hline
      \multirow{1}{*}{Planar graphs with girth 10} 
      & $({\cal I},{\cal F}_2)$ & Present paper \\ \hline
      \multirow{1}{*}{Planar graphs with girth 11} 
      & $({\cal I},\Delta_1)$ & Kim, Kostochka and Zhu~\cite{kim2014improper} \\ 
      \hline
    \end{tabular}
  \end{center}

  \caption{Known results on planar graphs.}
  \label{tab1}
\end{table}
\medskip

\section{Proof of Theorem~\ref{t-main} \label{p-main}}

Let $M < 3$, and let $d$ be an integer such that $d \ge \frac{2}{3-M} - 2$.
Let us call a \emph{good} $d$-partition of a graph $G$ a partition $(I,F)$ of the vertices of $G$ such that $I$ is an independent set of $G$, $G[F]$ is a graph with maximum degree $d$, and every cycle in $G[F]$ goes through a vertex with degree $2$ in $G$. Note that for any graph $G$, if $G$ admits a good $d$-partition, then $G$ admits an $({\cal I},{\cal F}_d)$-partition: while there is a vertex $v$ with degree $2$ in $G$ that is in $F$ and has two neighbours in $F$, move $v$ from $F$ to $I$.
Theorem~\ref{t-main} is implied by the following lemma:
\begin{lemm} \label{l-main}
Every graph $G$ with $\mad(G) < M$ has a good $d$-partition.
\end{lemm}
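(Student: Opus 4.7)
The plan is to prove Lemma~\ref{l-main} by a minimum-counterexample / discharging argument. Suppose the lemma fails and let $G$ be a counterexample minimising $|V(G)|+|E(G)|$; by minimality $G$ is connected. The proof splits into a structural part, which establishes a list of local configurations that $G$ cannot contain (the ``reducible configurations''), and a discharging part, in which these forbidden configurations force a contradiction with $\mathrm{ad}(G) < M$. It is convenient to observe that the hypothesis $d \ge \frac{2}{3-M}-2$ rewrites as $M \le 3 - \frac{2}{d+2}$, so I would work with the extremal value $M = 3 - \frac{2}{d+2}$.

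For the structural part, the generic reducibility argument runs as follows. Given a putative configuration $X$ in $G$, pick a carefully chosen subset $S \subseteq V(X)$, apply minimality to obtain a good $d$-partition $(I',F')$ of $G-S$, and extend it to a good $d$-partition $(I,F)$ of $G$, possibly after a local modification of $(I',F')$. I would aim to forbid vertices of degree $\le 1$, long threads of consecutive degree-$2$ vertices between two vertices of degree $\ge 3$, and vertices of degree $k$ with $3 \le k \le d+1$ having too many degree-$2$ neighbours (or neighbours lying on short threads). The cycle clause in the definition of a good $d$-partition is tailored exactly for this purpose: it lets us place a degree-$2$ vertex $v$ of $G$ into $F$ even when this creates a cycle in $G[F]$, provided only that the $F$-degrees of $v$'s two neighbours stay at most $d$. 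This structural part is the main obstacle of the proof: inserting a degree-$2$ vertex $v$ into $F$ may push the $F$-degree of an already saturated neighbour above $d$, and if a neighbour of $v$ lies in $I'$ then $v$ is forced into $F$. Handling these obstructions requires exchange arguments along threads of degree-$2$ vertices that simultaneously shift several vertices between $I$ and $F$ (for instance, swapping an endpoint of the thread from $F$ to $I$ to make room, then re-balancing the thread). The list of forbidden configurations has to be calibrated so that each is genuinely reducible while being restrictive enough for the discharging to close.

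For the discharging part, assign to each vertex $v$ the initial charge $\mu(v) = d(v) - M$, so that $\sum_v \mu(v) = |V(G)|\,(\mathrm{ad}(G)-M) < 0$. With $M = 3 - \frac{2}{d+2}$, a degree-$2$ vertex carries charge $-\frac{d}{d+2}$, a degree-$3$ vertex carries $\frac{2}{d+2}$, and a degree-$k$ vertex with $k\ge 3$ carries at least $k-3 + \frac{2}{d+2} > 0$. I would redistribute charge from vertices of degree $\ge 3$ to their degree-$2$ neighbours via a rule tailored to the reducibilities (for example, each vertex of degree $\ge 3$ sending a fixed amount along every thread emanating from it, possibly with extra charge reserved for particularly short threads), and then check, configuration by configuration, that every vertex ends with non-negative final charge, which contradicts the negativity of the total charge and completes the proof.
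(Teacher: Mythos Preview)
Your overall framework---minimum counterexample plus discharging---matches the paper, but the proposal stops at a plan and is missing the one structural device that makes the discharging close at the sharp threshold $d\ge \frac{2}{3-M}-2$. The paper does \emph{not} proceed via ``long threads'' or via bounding the number of $2$-neighbours of small vertices. After the easy reducibilities (no $1^-$-vertex; every $2$-vertex has a big neighbour; a tree of small $3^+$-vertices has an external $3^+$-neighbour), it builds an auxiliary \emph{rooted} forest $L$ (the ``light forest'') on the $2$-vertices, the big vertices and the buds, with an explicit father/son orientation. The single decisive reducible configuration is then Lemma~\ref{kroot}: a big $k$-vertex cannot have all $k$ of its neighbours as sons in $L$. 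The discharging is correspondingly \emph{asymmetric}: a big vertex gives $1-\epsilon$ to each of its sons, $\tfrac{1-\epsilon}{2}$ to its other $2$-neighbours, and \emph{nothing} to its father. This orientation is exactly what buys the extra unit: a big $k$-vertex pays for at most $k-1$ neighbours, and $k-3+\epsilon-(k-1)(1-\epsilon)=k\epsilon-2\ge 0$ iff $k\ge \tfrac{2}{\epsilon}=d+2$.

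Your proposed rule---``each vertex of degree $\ge 3$ sending a fixed amount along every thread emanating from it''---is symmetric in the two endpoints of a thread and loses precisely this slack: a big $k$-vertex may then have to pay for all $k$ neighbours, while small $3^+$-vertices (which in the paper give nothing and start with charge only $\epsilon$) would be forced to contribute as well. Nothing in your list of reducible configurations forbids a big $k$-vertex from having $k$ demanding $2$-neighbours, and that configuration is only reducible once one has the light-forest hierarchy available: the proof of Lemma~\ref{kroot} recolours \emph{all} descendants of the vertex via a ``descending'' procedure along $L$, not just a local swap on a thread. Likewise, your configuration ``small vertex with too many $2$-neighbours'' is neither needed (small $3^+$-vertices are absorbed into buds and carry no discharging burden) nor reducible in the strength you would require. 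So as written the plan does not close; the missing idea is the oriented auxiliary forest and the recursive recolouring it supports.
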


Our proof uses the discharging method. For the sake of contradiction, assume that Lemma~\ref{l-main} is false. Let $G$ be a counter example to Lemma~\ref{l-main} with minimum order.

For all $k$, a vertex of degree $k$, at least $k$, or at most $k$ in $G$ is a $k$-vertex, a $k^+$-vertex, or a $k^-$-vertex respectively. A $(d+1)^-$-vertex is a \emph{small} vertex, and a $(d+2)^+$-vertex is a \emph{big} vertex. Let $v$ be a vertex of $G$. For all $k$, a neighbour of $v$ of degree $k$, at least $k$, or at most $k$ in $G$ is a $k$-neighbour, a $k^+$-neighbour, or a $k^-$-neighbour of $v$ respectively. A neighbour of $v$ that is a big vertex is a \emph{big neighbour} of $v$, and a neighbour of $v$ that is a small vertex is a \emph{small neighbour} of $v$. We start by proving some lemmas on the structure of $G$. Specifically, we prove that some configurations are reducible, and thus cannot occur in $G$.

\begin{lemm} \label{1v}
There are no $1^-$-vertices in $G$.
\end{lemm}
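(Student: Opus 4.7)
The plan is a straightforward minimality argument: delete an offending low-degree vertex, apply minimality of $G$ to get a good $d$-partition of the smaller graph, and then reinsert the vertex into $I$ or $F$ depending on where its (at most one) neighbour landed. Since any subgraph $H$ of $G-v$ is also a subgraph of $G$, we have $\mad(G-v) \le \mad(G) < M$, so the induction hypothesis (minimality of $G$) gives $G-v$ a good $d$-partition $(I,F)$.

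First I would handle the trivial case of a $0$-vertex $v$: place $v$ in $I$. Independence is preserved (no edges incident to $v$), $G[F]$ is unchanged, and no new cycles are introduced, so $(I\cup\{v\},F)$ is a good $d$-partition of $G$, contradicting that $G$ is a counterexample.

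For the $1$-vertex case, let $u$ be the unique neighbour of $v$ in $G$ and split on where $u$ lies in the partition of $G-v$. If $u\in F$, place $v$ in $I$: then $v$'s only neighbour is in $F$, so $I\cup\{v\}$ is still independent, and $F$ is unchanged so conditions (b) and (c) of the good $d$-partition definition are trivially preserved. If $u\in I$, place $v$ in $F$: the set of neighbours of $v$ in $F\cup\{v\}$ is empty (since $u\in I$), so $v$ has degree $0$ in $G[F\cup\{v\}]$, the maximum-degree bound $d$ is not violated at any vertex, and $v$ cannot lie on any cycle, so no new cycle is introduced into $G[F\cup\{v\}]$.

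In both subcases we produce a good $d$-partition of $G$, which is the desired contradiction. There is no real obstacle here; the lemma is a warm-up reducibility statement and the only thing to be careful about is checking all three conditions defining a good $d$-partition (independence of $I$, degree bound in $G[F]$, and the cycle-through-a-$2$-vertex condition) when reinserting $v$, all of which are immediate because $v$ contributes at most one edge.
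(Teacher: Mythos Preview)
Your argument is correct and matches the paper's proof: delete the $1^-$-vertex, apply minimality to get a good $d$-partition of $G-v$, and reinsert $v$ into $I$ or $F$ according to where its (at most one) neighbour lies. The paper phrases the case split slightly more compactly (``if $v$ has no neighbours in $I$ add it to $I$; otherwise it has no neighbours in $F$ and we add it to $F$''), but the content is identical.
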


\begin{proof}
Assume there is a $1^-$-vertex $v$ in $G$. The graph $G-v$ has one fewer vertex than $G$, and thus, by minimality of $G$, admits a good $d$-partition $(I,F)$. If $v$ has no neighbours in $I$, then we can add it to $I$. Otherwise, it has no neighbours in $F$, and we can add it to $F$. In both cases, that leads to a good $d$-partition of $G$, a contradiction.
\end{proof}

\begin{lemm} \label{2v}
Every $2$-vertex has at least one big neighbour.
\end{lemm}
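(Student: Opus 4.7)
My plan is a standard extension argument from minimality. Assume for contradiction that $v$ is a $2$-vertex with two small neighbours $u_1,u_2$, that is, $d_G(u_i)\le d+1$ for $i=1,2$. Delete $v$; by the choice of $G$ the graph $G-v$ admits a good $d$-partition $(I',F')$, and the task is to extend it to a good $d$-partition of $G$ by placing $v$ and, if necessary, performing a local swap.

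Two configurations are immediate. If $u_1,u_2\in F'$, then add $v$ to $I'$: independence holds since both neighbours of $v$ lie in $F$, and $G[F']$ is unchanged. If $u_1,u_2\in I'$, then add $v$ to $F'$: $v$ has no neighbour in $F$, so it is isolated in $G[F]$ and cannot raise the maximum degree or create a cycle. The remaining case is mixed, say $u_1\in F'$ and $u_2\in I'$. Putting $v\in I$ is forbidden by $u_2$, so the natural attempt is $v\in F$; this works as soon as $d_{F'}(u_1)\le d-1$, and the only genuine obstruction is $d_{F'}(u_1)=d$.

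I expect this subcase to be the main technical point, and I would resolve it by exploiting smallness. Since $u_1$ is small, $d_{G-v}(u_1)\le d$, so $d_{F'}(u_1)=d$ forces $d_G(u_1)=d+1$ with \emph{every} non-$v$ neighbour of $u_1$ lying in $F'$; in particular, $u_2\in I'$ cannot be a neighbour of $u_1$ in $G$. I would then swap: put $v\in F$ and move $u_1$ from $F'$ to $I$. Independence of $I'\cup\{u_1\}$ holds because $u_1$'s non-$v$ neighbours all lie in $F'$ and $u_2\not\sim u_1$; in the new $F=(F'\setminus\{u_1\})\cup\{v\}$ the vertex $v$ is isolated (both its neighbours are in $I$), all other $F$-degrees only decrease when $u_1$ leaves, so the maximum-degree bound is preserved, and no new cycle can appear (any cycle in $G[F]$ avoids the isolated $v$, hence is a cycle of $G[F']$ and already passes through a $2$-vertex). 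This contradicts the minimality of $G$ and proves the lemma.
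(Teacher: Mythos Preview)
Your proof is correct and follows essentially the same approach as the paper's. The only cosmetic difference is the order of the case split in the mixed configuration: the paper first checks whether the $F$-neighbour $w$ has any $I$-neighbour (and swaps $w$ into $I$ if not), whereas you first try $v\in F$ and swap $u_1$ into $I$ only when $d_{F'}(u_1)=d$; since $u_1$ is small these two conditions coincide, and the resulting extensions are identical.
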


\begin{proof}
Assume $v$ is a $2$-vertex adjacent to two small vertices, $u$ and $w$. The graph $G-v$ has one fewer vertex than $G$, and thus, by minimality of $G$, admits a good $d$-partition $(I,F)$. If $u$ and $w$ are both in $F$, then we can put $v$ in $I$, and if they are both in $I$, then we can put $v$ in $F$. Therefore without loss of generality, we can assume that $u \in I$ and $w \in F$. If $w$ has no neighbours in $I$, then we can put it in $I$, and put $v$ in $F$. Therefore we can assume that $w$ has at least one neighbour in $I$ and thus at most $d-1$ neighbours in $F$ (since $w$ is a small vertex in $G$). Then $v$ has at most one neighbour in $F$, and this neighbour $w$ has at most $d-1$ neighbours in $G[F]$, thus we can add $v$ to $F$.
In every case, this leads to a good $d$-partition of $G$, a contradiction.
\end{proof}

A $2$-vertex is a \emph{leaf} if it is adjacent to a small vertex, and it is a \emph{non-leaf $2$-vertex} otherwise. Note that by Lemma~\ref{2v}, each $2$-vertex has at most one small neighbour. 

\begin{lemm} \label{nobud}
Let $B$ be a set of small $3^+$-vertices such that $G[B]$ is a tree. There exists a $3^+$-vertex $v \notin B$ that is adjacent to a vertex of $B$.
\end{lemm}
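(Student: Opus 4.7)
The plan is to suppose the conclusion fails and derive a contradiction. If no $3^+$-vertex outside $B$ is adjacent to $B$, then by Lemma~\ref{1v} every neighbour of $B$ outside $B$ is a $2$-vertex, and by Lemma~\ref{2v} each such $2$-vertex has a big neighbour (in particular, it is not in $B$ and is distinct from any small vertex).

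Pick a leaf $b$ of the tree $G[B]$; it has at most one $B$-neighbour $b'$ (none if $|B|=1$), and all other neighbours of $b$ are $2$-vertices. By minimality of $G$, the graph $G - b$ admits a good $d$-partition $(I',F')$, which I want to extend to $G$. The first attempt is: if $b$ has no neighbour in $I'$, place $b$ in $I$, so that $I:=I'\cup\{b\}$ is independent (all neighbours of $b$ lie in $F'$) and $F:=F'$ is unchanged; otherwise place $b$ in $F$. In the latter situation, since $b$ has at least one neighbour in $I'$, its degree in $F$ is at most $\deg_G(b)-1 \le d$; each $2$-vertex $F$-neighbour of $b$ has degree at most $1$ in $F'$ (being a $1$-vertex in $G-b$) and thus at most $2\le d$ in the new $F$; and every new cycle through $b$ passes through one of its $2$-vertex $F$-neighbours, hence through a $2$-vertex of $G$.

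The only way this fails is that placing $b$ in $F$ might push $\deg_F(b')$ to $d+1$, which happens precisely when $b'\in F'$ with $\deg_{F'}(b')=d$ while some $2$-vertex neighbour of $b$ lies in $I'$. This is the main obstacle. The key observation is that since $b'$ is small, $\deg_G(b')\le d+1$, so combined with $\deg_{F'}(b')=d$ we must have $\deg_G(b')=d+1$ and $\deg_{G-b}(b')=d$; since all $d$ of $b'$'s $F'$-neighbours are in $G-b$, every $G-b$-neighbour of $b'$ lies in $F'$, and in particular $b'$ has no $G-b$-neighbour in $I'$.

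Hence I can safely move $b'$ from $F'$ to $I'$ before extending: independence of the enlarged $I'$ is preserved (no neighbour of $b'$ in $G-b$ is in $I'$), the maximum degree of $F'$ only decreases, and every cycle of $G[F'\setminus\{b'\}]$ was already a cycle of $G[F']$ and thus still passes through a $2$-vertex of $G$. This yields a good $d$-partition of $G-b$ in which $b'\in I'$, so the previous paragraph applies: place $b$ in $F$, noting that $b'\in I$ contributes nothing to $\deg_F(b)$, yielding $\deg_F(b)\le \deg_G(b)-1 \le d$. The resulting $(I,F)$ is a good $d$-partition of $G$, contradicting the choice of $G$ as a minimal counterexample.
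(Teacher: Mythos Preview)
Your proof is correct. You pick a leaf $b$ of $G[B]$, delete only that one vertex, and extend the good $d$-partition of $G-b$; the case analysis and the fix-up (moving $b'$ into $I'$ when $\deg_{F'}(b')=d$) are sound. One small remark worth making explicit: your inequality $2\le d$ for the $2$-vertex neighbours of $b$ is justified because the very existence of a small $3^+$-vertex forces $d+1\ge 3$.

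The paper takes a different and somewhat cleaner route: it deletes \emph{all} of $B$ at once, obtains a good $d$-partition of $G-B$, and then processes the vertices of $B$ greedily, putting each into $I$ if it currently has no $I$-neighbour and into $F$ otherwise. Since every vertex of $B$ is small, any vertex sent to $F$ this way has at most $d$ neighbours outside $I$, and since $G[B]$ is a tree and every edge from $B$ to $V(G)\setminus B$ ends at a $2$-vertex, no bad cycle can arise. This avoids the need for your local repair step on $b'$, at the cost of removing more vertices up front. Your approach, by contrast, shows that even the removal of a single well-chosen vertex suffices, which is a slightly sharper reduction; the paper's approach generalises more directly to the later arguments where whole buds are coloured simultaneously.
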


\begin{proof}
Assume that the lemma is false, that is every vertex that is not in $B$ but has a neighbour in $B$ is a $2$-vertex. By minimality of $G$, $G - B$ admits a good $d$-partition $(I,F)$. For every vertex $v$ in $B$, successively, we put $v$ in $I$ if $v$ has no neighbours in $I$ and we put it in $F$ otherwise. Note that this way a vertex that we add to $F$ has at most $d$ neighbours that are not in $I$, and we cannot make any cycle in $G[F]$ that does not go through a $2$-vertex, since $G[B]$ is a tree. Thus we have a good $d$-partition of $G$, a contradiction.
\end{proof}

Let $B$ be a (maximal) set of small $3^+$-vertices such that:
\begin{enumerate}[label=(\alph*)]
\item $G[B]$ is a tree, 
\item there is only one edge that links a vertex of $B$ to a $3^+$-vertex $u$ outside of $B$,
\item $u$ is a big vertex. 
\end{enumerate}
We call $B$ a \emph{bud} with father $u$.

Let us build the \emph{light forest} $L$, by the following three steps:
\begin{enumerate}
\item While there are leaves that are not in $L$, do the following. Pick a leaf $v$, and let $u$ be the big neighbour of $v$ (that exists by Lemma~\ref{2v}). Add to $L$ the vertex $v$, the edge $uv$, and the vertex $u$ (if it is not already in $L$). Also set that $u$ is the \emph{father} of $v$ (and $v$ is a \emph{son} of $u$). See Figure~\ref{figL}, left. Note that by doing this, we obtain a star forest with only big vertices and leaves. Also note that the set of the big vertices and the set of the leaves are independent sets in $L$ (but not necessarily in $G$).

\item While there are buds that are not in $L$, do the following. Pick a bud $B$. Let $u$ be the father of $B$, and let $v$ be the vertex of $B$ adjacent to $u$. Add $G[B]$ to $L$, as well as the edge $uv$, and the vertex $u$ (if it is not already in $L$). The vertex $u$ is the father of $v$, and the father/son relationship in $B$ is that of the tree $G[B]$ rooted at $v$. See Figure~\ref{figL}, middle.

\item While, for some $k$, there exists a big $k$-vertex $w \in L$ that has $k-1$ sons in $L$ and whose last neighbour is a $2$-vertex that is not in $L$, do the following. Let $v$ be the $2$-neighbour of $w$ that is not in $L$, and let $u$ be the neighbour of $v$ distinct from $w$. Note that $v$ is a non-leaf $2$-vertex (since it was not added to $L$ in Step 1), therefore $u$ is a big vertex. Add to $L$ the vertex $v$, the edges $uv$ and $vw$, and the vertex $u$ (if it is not already in $L$). We set that $v$ is the father of $w$, and that $u$ is the father of $v$. See Figure~\ref{figL}, right. Note that by doing this, $L$ remains a rooted forest, and that each of the set of the big vertices and the set of the $2$-vertices remains independent in $L$. 
\end{enumerate}

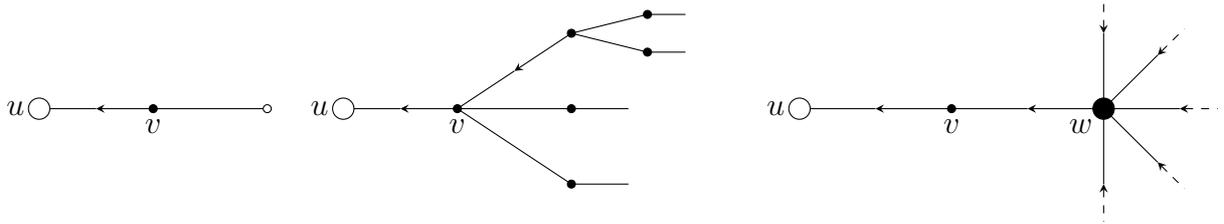
\begin{figure}
\begin{center}
\begin{tikzpicture}
\coordinate (v) at (-10.5,0);
\coordinate (u) at (-12,0);
\coordinate (v') at (-9,0);

\draw (v) node [below] {$v$} ; 
\draw (-12.05,0) node [left] {$u$} ; 

\draw (u) -- (-11.25,0);
\draw [>=stealth,<-] (-11.25,0) -- (v);
\draw (v) -- (v');

\draw [fill=black](v) circle (1.5pt) ;
\draw [fill=white](v') circle (1.5pt) ;
\draw [fill=white](u) circle (4pt) ;

\coordinate (u) at (-8,0);
\coordinate (v) at (-6.5,0);
\coordinate (v1) at (-5,1);
\coordinate (v2) at (-5,0);
\coordinate (v3) at (-5,-1);
\coordinate (v11) at (-4,1.25);
\coordinate (v12) at (-4,0.75);

\draw (v) node [below] {$v$} ; 
\draw (-8.05,0) node [left] {$u$} ; 

\draw (u) -- (-7.25,0);
\draw [>=stealth,<-] (-7.25,0) -- (v);
\draw (v) -- (-5.75,0.5);
\draw [>=stealth,<-] (-5.75,0.5) -- (v1);
\draw (v) -- (v2);
\draw (v2) -- (-4.25,0);
\draw (v) -- (v3);
\draw (v3) -- (-4.25,-1);
\draw (v1) -- (v11);
\draw (v11) -- (-3.5,1.25);
\draw (v1) -- (v12);
\draw (v12) -- (-3.5,0.75);

\draw [fill=black](v) circle (1.5pt) ;
\draw [fill=black](v1) circle (1.5pt) ;
\draw [fill=black](v2) circle (1.5pt) ;
\draw [fill=black](v3) circle (1.5pt) ;
\draw [fill=black](v11) circle (1.5pt) ;
\draw [fill=black](v12) circle (1.5pt) ;
\draw [fill=white](u) circle (4pt) ;

\coordinate (v1) at (0,0);
\coordinate (u1) at (-2,0);
\coordinate (w1) at (2,0); 

\draw (v1) node [below] {$v$} ;
\draw (w1) node [below left] {$w$} ;
\draw (-2.05,0) node [left] {$u$} ;

\draw (u1) -- (-1,0);
\draw [>=stealth,<-] (-1,0) -- (v1);
\draw (v1) -- (1,0);
\draw [>=stealth,<-] (1,0) -- (w1);
\draw (w1) -- (3,0);
\draw [>=stealth,<-] [dashed] (3,0) -- (3.5,0);
\draw (w1) -- (2.71,0.71);
\draw [>=stealth,<-] [dashed] (2.71,0.71) -- (3.06,1.06);
\draw (w1) -- (2,1);
\draw [>=stealth,<-] [dashed] (2,1) -- (2,1.5);
\draw (w1) -- (2.71,-0.71);
\draw [>=stealth,<-] [dashed] (2.71,-0.71) -- (3.06,-1.06);
\draw [<-] (w1) -- (2,-1);
\draw [>=stealth,<-] [dashed] (2,-1) -- (2,-1.5);

\draw [fill=black](v1) circle (1.5pt) ;
\draw [fill=white](u1) circle (4pt) ;
\draw [fill=black](w1) circle (4pt) ;

\end{tikzpicture}
\end{center}
\caption{The construction of the light forest $L$. The big vertices are represented with big circles, and the small vertices with small circles. The filled circles represent vertices whose incident edges are all represented. The dashed lines are the continuation of the light forest. The arrows point from son to father in $L$.\label{figL}}
\end{figure}

As noticed previously, $L$ is a rooted forest. We say that a vertex $v$ is a \emph{descendant} of a vertex $u \ne v$ in $L$ if there are vertices $v_0 = v$, $v_1$, ..., $v_k = u$ in $L$, such that for $i \in \{0,1,...,k-1\}$, $v_{i+1}$
is the father of $v_i$ in $L$.
A vertex $v$ in $L$ is incident to an edge that is not in $L$ only if either $v$ is a big vertex and the root of its connected component in $L$, or $v$ is a leaf, or $v$ is in a bud. The \emph{pending vertices} of $L$ are the vertices that are not in $L$ but are adjacent to a leaf of $L$. Note that the pending vertices are small (by construction).

Let $B$ be a bud with father $u$. Let $S \subseteq V(G) \setminus
(B\cup\{u\})$
and let $(I,F)$ be a good $d$-partition of $S\cup \{u\}$ such that $u$
either is in $I$ or has at most $d-1$ neighbours in $F$. We show that
we can extend the good $d$-partition to $S \cup \{u\} \cup B$. We proceed as
follows: for every vertex $v\in B$, we add $v$ to $I$ if it has no
neighbours in $I$ or to $F$ otherwise. The vertices in $I$ clearly
form an independent set. Moreover, $G[F]$ has maximum degree at most
$d$ and every cycle of $G[F]$ goes through a $2$-vertex by
construction of a bud. This leads to a good $d$-partition of $S\cup B \cup
\{u\}$. We call that
process \emph{colouring the bud $B$}.

Let $v$ be a $2$-vertex of $L$, $u$ its father and $D_v$ the set of the
descendants of
$v$. Let $(I,F)$ be a good $d$-partition of $S \subseteq V(G) \setminus (D_v
\cup \{u,v\})$. We show that we can extend the good $d$-partition to $S \cup
D_v\cup
\{v\}$. We proceed as follows:
\begin{enumerate}
\item[Step $1$.] We add every big vertex of $D_v$ to $I$. Indeed, big vertices
  form an independent set in $L$ and have no neighbours in $S$ by
construction.
\item[Step $2$.]\label{item:pending} Every pending vertex $w\in S$ that has no
neighbours in $I$ is
  added to $I$.
\item[Step $3$.] We add every $2$-vertex of $D_v$ and $v$ to $F$.  Indeed,
  $2$-vertices of $D_v$ form a stable set in $L$. Moreover,
  Step~\ref{item:pending} ensures that the maximum degree of $G[F]$ is
  at most $d$.
\item[Step $4$.] Finally, we colour every bud. Indeed, the father of every bud of
  $D_v$ is in $I$.
\end{enumerate}
This leads to a good $d$-partition of $S\cup D_v \cup \{v\}$. We call that
process \emph{descending $v$}.

\begin{lemm} \label{kroot}
For all $k$, there are no big $k$-vertices in $G$ that are in $L$ and have $k$ sons in $L$.
\end{lemm}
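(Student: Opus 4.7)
I plan to argue by minimality. Suppose there exists such a $w$ in $G$. My first step is to extract its structural consequences. Because Step~$3$ of the construction of $L$ only promotes a non-leaf $2$-vertex to a \emph{father} of a vertex already in $L$ (never to a new son), each of the $k$ sons of $w$ was inserted during Step~$1$ or Step~$2$; hence every son of $w$ is either a leaf of $L$ or the bud-root of a bud whose father is $w$. Moreover, since $\deg_G(w)=k$ and $w$ already has $k$ sons, $w$ has no father in $L$, so $w$ is the root of its component of $L$, and every neighbour of $w$ in $G$ lies in the set $D_w$ of descendants of $w$ in $L$.

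With that in hand, the plan is to remove $\{w\}\cup D_w$ and apply the minimality of $G$ to the smaller graph $G':=G-(\{w\}\cup D_w)$, obtaining a good $d$-partition $(I,F)$ of $G'$. I then extend it to $G$, in the spirit of the descending procedure already defined, as follows: (i) place $w$ in $I$, which is safe because $w$ has no neighbour outside $D_w$; (ii) for each leaf $v$ that is a son of $w$, let $v'$ be its unique other neighbour (a pending vertex of $L$, lying in $G'$), and, processing these pending vertices in any order, move $v'$ from $F$ to $I$ whenever $v'$ currently has no neighbour in $I$; (iii) place every leaf son of $w$ in $F$; (iv) for every son of $w$ that is a bud-root, colour the corresponding bud using the bud-colouring procedure, which is applicable because its father $w$ has been put in $I$.

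The main work, and what I expect to be the delicate step, is to verify that the resulting partition is a good $d$-partition. Independence of $I$ is immediate: $w$ has no neighbour in $G'$, and each pending vertex moved to $I$ was checked, at the moment of the move, against the current $I$. The degree bound in $G[F]$ is where the construction of $L$ must pay off: a leaf $v$ contributes $F$-degree at most $1$, since $\deg_G(v)=2$ and its neighbour $w$ is in $I$; a pending $v'$ that remains in $F$ has at least one $I$-neighbour by the criterion in step~(ii), and since $v'$ is small, $\deg_G(v')\le d+1$, so its final $F$-degree is at most $d$ even after all leaves having $v'$ as pending vertex have been added; bud vertices are taken care of by the bud-colouring procedure. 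For the cycle condition, a leaf-son has $F$-degree at most $1$ and thus lies on no cycle of $G[F]$; any cycle of $G[F]$ meeting a bud vertex must exit the bud through an external edge (since $G[B]$ is a tree), and since $w\in I$ is excluded the exit neighbour is a $2$-vertex of $G$, which therefore lies on the cycle; any remaining cycle is already contained in $G'$ and is acceptable by assumption on $(I,F)$. The extension thus yields a good $d$-partition of $G$, contradicting the choice of $G$.
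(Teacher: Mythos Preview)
Your argument has a genuine gap in the very first structural step. You claim that Step~3 of the construction of $L$ ``only promotes a non-leaf $2$-vertex to a father of a vertex already in $L$ (never to a new son),'' and conclude that every son of your big $k$-vertex $w$ is a leaf or a bud-root. But re-read Step~3: when the non-leaf $2$-vertex $v$ is inserted, the paper sets both that $v$ is the father of the saturated big vertex and that \emph{$u$ is the father of $v$}. Thus $v$ becomes a new son of the big vertex $u$. Consequently a big vertex can acquire, during Step~3, sons that are non-leaf $2$-vertices, and each such son carries beneath it an entire subtree of $L$ (a big vertex with all of its own descendants). Your extension procedure---placing $w$ in $I$, fixing pending vertices, putting leaf-sons in $F$, and colouring bud-sons---simply never assigns these non-leaf $2$-vertex sons or any of their descendants to $I$ or $F$, even though you removed all of $D_w$ from $G$ to form $G'$. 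So the partition you build is not a partition of $V(G)$.

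This is exactly why the paper defines the \emph{descending} procedure just before the lemma: given a $2$-vertex $v$ of $L$ with father $u$, descending $v$ extends a good $d$-partition of a set disjoint from $\{u,v\}\cup D_v$ to one that covers $D_v\cup\{v\}$, recursively handling all big vertices, $2$-vertices, buds and pending vertices below $v$. In the paper's proof one removes the whole component $C$ of $u$ in $L$, takes a good $d$-partition of $G-V(C)$, and then \emph{descends every $2$-neighbour of $u$}; only afterwards is $u$ placed in $I$ and the buds with father $u$ coloured. Your argument is salvageable, but you must replace your ad~hoc treatment of leaf-sons by an invocation of the descending procedure for each $2$-vertex son of $w$, since those sons may sit atop arbitrarily deep subtrees of $L$.
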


\begin{proof}
Let $u$ be a big $k$-vertex that has $k$ sons in $L$. Note that this implies that $u$ is the root of its connected component in $L$. Let $C$ be the connected component of $u$ in $L$. Let $H = G-V(C)$. The graph $H$ has fewer vertices than $G$ and thus, by minimality of $G$, $H$ admits a good $d$-partition $(I,F)$. Let $N$ be the set of the $2$-neighbours of $u$. We descend every vertex of $N$. Note that this implies that the son of every vertex of $N$ is put in $I$, therefore, up to recolouring the small neighbours of the leaves adjacent to $u$, we can add every vertex of $N$ to $F$ and $u$ to $I$. Then we colour every bud of father $u$. This leads to a good $d$-partition of $G$, a contradiction.
\end{proof}

\subsection*{Discharging procedure}
Let $\epsilon = 3 - M$. 
Recall that $d \ge \frac{2}{3-M} - 2 = \frac{2}{\epsilon}-2$, therefore
 $\epsilon \ge \frac{2}{d+2} > 0$. 
We start by assigning to each $k$-vertex a charge equal to $k - M = k - 3 + \epsilon$. Note that since $M$ is bigger than the average degree of $G$, the sum of the charges of the vertices is negative. The initial charge of each $3^+$-vertex is at least $\epsilon$, and thus is positive.

For every big vertex $v$, $v$ gives charge $1 - \epsilon$ to each of its $2$-neighbours that are its sons in $L$, does not give anything to its father in $L$ (if it has one), and gives $\frac{1 - \epsilon}{2}$ to its other $2$-neighbours.

\begin{lemm}\label{nneg}
Every vertex has non-negative charge at the end of the procedure.
\end{lemm}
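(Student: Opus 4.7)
The plan is to verify non-negativity vertex by vertex, splitting on the degree $k$. For a small $3^+$-vertex ($3\le k\le d+1$), the initial charge $k-3+\epsilon\ge \epsilon > 0$ is already non-negative and no charge is sent out, so this case is immediate.

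For a $2$-vertex $v$, whose initial charge is $\epsilon - 1$, I would invoke Lemma~\ref{2v} to ensure a big neighbour and then split on how $v$ sits in the light forest $L$. If $v$ is a leaf, then Step~1 makes $v$ a son of its unique big neighbour, so $v$ receives $1-\epsilon$. If $v$ is a non-leaf $2$-vertex in $L$, then by the definition of Step~3 $v$ is the son of one big neighbour and the father of the other, so $v$ receives $1-\epsilon$ from the former and $0$ from the latter. If $v$ is a non-leaf $2$-vertex outside $L$, then $v$ is neither son nor father of either of its two big neighbours, so each of them contributes $\tfrac{1-\epsilon}{2}$. In all three sub-cases the total received is $1-\epsilon$, which cancels the initial deficit and leaves $v$ with charge $0$.

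The genuinely delicate case is a big $k$-vertex $v$, with initial charge $k-3+\epsilon$. Let $a$ be the number of $2$-vertex sons of $v$ in $L$, and $b$ the number of $2$-neighbours of $v$ that are neither son nor father of $v$ in $L$; then $v$ gives out exactly $(1-\epsilon)(a + b/2)$. The crucial structural observation I would establish first is that every $2$-neighbour of $v$ lying in $L$ is in fact either a son or the father of $v$: a leaf adjacent to $v$ is made a son of $v$ by Step~1 (since $v$ is its unique big neighbour), while a non-leaf $2$-vertex in $L$ is produced by some invocation of Step~3, which always assigns it a big father and a big son, forcing $v$ to play one of those two roles. Hence $b$ counts only $2$-neighbours of $v$ lying outside $L$.

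Denote by $s$ the total number of sons of $v$ in $L$, so $a\le s$; Lemma~\ref{kroot} gives $s\le k-1$. If $s=k-1$, then the unique non-son neighbour of $v$ must, by the termination of Step~3 combined with the previous observation, either be the father of $v$ in $L$ or not be a $2$-vertex at all; in particular $b=0$ and $a + b/2\le k-1$. If $s\le k-2$, then the bounds $a\le s$ and $b\le k-s$ yield $a + b/2 \le (s+k)/2 \le k-1$. In either case the charge given away is at most $(k-1)(1-\epsilon)$, so the final charge of $v$ is at least $k-3+\epsilon - (k-1)(1-\epsilon) = k\epsilon - 2$, which is non-negative because $k\ge d+2\ge 2/\epsilon$. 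The main obstacle is this extremal case $s = k-1$: without the termination of Step~3 one would be forced to allow $b=1$, leaving an unrecoverable deficit when $\epsilon$ is small.
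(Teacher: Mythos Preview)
Your proof is correct and follows essentially the same approach as the paper's. You are somewhat more explicit in the big-vertex case---splitting on whether $s=k-1$ or $s\le k-2$ and bounding $a+b/2\le k-1$---whereas the paper simply asserts that the total charge given out is at most $(k-1)(1-\epsilon)$ once the extremal case $s=k-1$ is handled; otherwise the arguments coincide.
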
 

\begin{proof}
The small $3^+$-vertices start with a non-negative charge, and do not give or receive charge throughout the procedure, thus they have non-negative charge at the end of the procedure. 

Every $2$-vertex is either in $L$, in which case it receives $1 - \epsilon$ from its father in $L$, or is not in $L$ and is a non-leaf $2$-vertex, in which case it receives $\frac{1 - \epsilon}{2}$ from each of its neighbours. As $2$-vertices have charge $\epsilon - 1$ in the beginning, and as they receive $1 - \epsilon$, they have charge $0$ at the end of the procedure.

Let $v$ be a big $k$-vertex. By Lemma~\ref{kroot}, $v$ has at most $k-1$ sons in $L$. Moreover, by construction of $L$, if $v$ has  $k-1$ sons in $L$, then either its last neighbour is a $3^+$-vertex, or its last neighbour is its father in $L$ (and in both cases $v$ does not give charge to this vertex).
Therefore $v$ gives charge amounting to at most $(k-1)(1 - \epsilon)$. Since its initial charge is $k - 3 + \epsilon$, in the end it has at least $k - 3 + \epsilon - (k-1)(1 - \epsilon)= k\epsilon - 2$. Since every big vertex has degree at least $d + 2 \ge \frac{2}{\epsilon}$, the final charge of each big vertex is at least $\frac{2}{\epsilon}\epsilon - 2 = 2 - 2 = 0$.

\end{proof}

By Lemma~\ref{nneg}, every vertex has non-negative charge at the end of the procedure, thus the sum of the charges at the end of the procedure is non-negative. Since no charge was created nor removed, this is a contradiction with the fact that the initial sum of the charges is negative. That ends the proof of Lemma~\ref{l-main}.

\section{Proof of Theorem~\ref{t-main2}}

This proof is similar to the proof of Theorem~\ref{t-main} above.
Let $\frac{8}{3} \le M < 3$, and let $d$ be an integer such that $d \ge \frac{1}{3-M}$.
We define good $d$-partitions as in Section~\ref{p-main}.
Theorem~\ref{t-main2} is implied by the following lemma:
\begin{lemm} \label{l-main2}
Every graph $G$ with $\mad(G) < M$ has a good $d$-partition.
\end{lemm}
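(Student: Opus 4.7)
My plan is to follow the template of Lemma~\ref{l-main} closely. Let $G$ be a minimum counterexample to Lemma~\ref{l-main2} and set $\epsilon = 3 - M \in (0, 1/3]$, so that $d \ge 1/\epsilon$ and every big vertex of $G$ has degree at least $d+2 \ge 1/\epsilon + 2$. The preliminary reducible configurations (no $1^-$-vertex, every $2$-vertex has at least one big neighbour, the bud lemma, and Lemma~\ref{kroot}) and the construction of the light forest $L$ together with the ``descending'' and ``colouring a bud'' operations rely only on the minimality of $G$ and on the definition of a good $d$-partition, so they transfer verbatim to the present setting.

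A quick check shows that the discharging of Lemma~\ref{nneg} alone no longer suffices. A big $k$-vertex ends with charge at least $k\epsilon - 2$, which for $k = d+2$ equals $2\epsilon - 1$; this is negative whenever $\epsilon < 1/2$, so in the range $\epsilon \le 1/3$ there is a deficit of up to $1 - 2\epsilon$ at each big vertex of minimum degree. Recovering this deficit is essentially the entire content of the proof.

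To do so, I would prove one or more additional reducible configurations tightening Lemma~\ref{kroot}. A natural target is to forbid a big $k$-vertex of minimum degree from having $k-1$ sons in $L$ that are all $2$-vertices. Its proof would follow the reduce-and-extend scheme of Lemma~\ref{kroot}: remove the offending vertex $v$ together with its entire $L$-subtree, invoke minimality on the remaining graph, and rebuild a good $d$-partition by descending the $2$-sons, colouring the buds, and adjusting colours along a bounded set of pending vertices. Combined, if necessary, with a refinement of the discharging in which small $3^+$-vertices in buds transfer part of their surplus charge (at least $\epsilon$ per vertex) to the big father of the bud, the sharpened bound on charge given from a big $k$-vertex (at most $(k-2)(1-\epsilon) + \tfrac{1-\epsilon}{2}$ instead of $(k-1)(1-\epsilon)$) together with the transfers from buds closes the gap across the whole range $\epsilon \le 1/3$. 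The verification that $2$-vertices and small $3^+$-vertices end with non-negative charge mirrors Lemma~\ref{nneg}.

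The hard part will be the new reducibility. Lemma~\ref{kroot} only dealt with the case where the offending vertex is a root of its $L$-component, so that removing the whole component was safe; for an internal big $k$-vertex $v$, removing the $L$-subtree rooted at $v$ leaves $v$'s father (a $2$-vertex, by the construction of $L$) already coloured, and its colour may constrain where $v$ can go and force a cascade of recolourings among the pending and small $3^+$-vertices below $v$. Managing this case analysis cleanly, so that the extension always produces a good $d$-partition of the whole $G$ with $F$-degree at most $d$ at every pending vertex and $v$ itself, is where most of the work lies.
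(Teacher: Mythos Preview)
Your plan correctly identifies that the discharging from Lemma~\ref{nneg} fails in this range, but the fix you propose does not close the gap. Even granting your sharpened bound $(k-2)(1-\epsilon)+\tfrac{1-\epsilon}{2}$ on the charge a big $k$-vertex gives away, its final charge is
\[
k-3+\epsilon-(k-2)(1-\epsilon)-\tfrac{1-\epsilon}{2}=\epsilon\Bigl(k-\tfrac12\Bigr)-\tfrac32,
\]
which for $k=d+2$ and $d\ge 1/\epsilon$ is non-negative only when $\epsilon\ge\tfrac13$; for every $\epsilon<\tfrac13$ a deficit persists at minimum-degree big vertices (and a $(d+3)$-vertex with $d+2$ sons still ends at $3\epsilon-1<0$ under the unmodified rule, so restricting the new reducibility to minimum degree would not suffice either). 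Transfers from buds cannot rescue this: a big vertex may have only $2$-vertex sons and no bud attached, and the $\epsilon$ carried by each small $3^+$-vertex is in any case needed to feed the leaf adjacent to it. Finally, the reducibility you are aiming at is not established; the cascade you describe for an internal big vertex is essentially the whole difficulty, and there is no reason to expect it to terminate locally.

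The paper does not look for a sharper local reducible configuration at all. Instead it proves a \emph{global} counting bound: using the Frank--Gy\'arf\'as orientation theorem (Theorem~\ref{gya}, via Lemmas~\ref{omega}--\ref{intern}) it shows that the number of internal $2$-vertices of $\widehat L$ is at most $2\sum_{v\text{ big}}(d_G(v)-d-1)$. The discharging is then redesigned around this: each big vertex gives only $1-2\epsilon$ (not $1-\epsilon$) to each son, small $3^+$-vertices supply the missing $\epsilon$ to the non-internal leaves adjacent to them, and a common pot, funded by $2(k-d-1)\epsilon$ from every big $k$-vertex, supplies the missing $\epsilon$ to each internal $2$-vertex. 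The global bound is exactly what makes the common pot solvent, and the resulting inequality at a big $k$-vertex reduces precisely to $d\ge 1/\epsilon$; no new local reducibility is required.
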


For the sake of contradiction, assume that Lemma~\ref{l-main2} is false.  Let $G$ be a counter example to Lemma~\ref{l-main2} with minimum order.

We take the same definitions as before. Lemmas \ref{1v}--\ref{kroot} are still true in this setting.
Frank and Gy{\'a}rf{\'a}s~\cite{gyarfas} prove the following theorem:

\begin{theo}[Frank and Gy{\'a}rf{\'a}s~\cite{gyarfas}] \label{gya}
Let $H = (V,E)$ be a graph, and let $\omega : V \rightarrow \mathbb{N}$.
There exists an orientation such that $\forall v \in V, d^+(v) \ge \omega(v)$ if and only if for all $X \subset V$, $\omega(X) \le |\{uv \in E, u \in X\}|$.
\end{theo}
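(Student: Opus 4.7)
The plan is to separate the two directions; necessity is immediate, and sufficiency reduces to Hall's theorem. For necessity, given an orientation with $d^+(v) \ge \omega(v)$ for every $v \in V$ and any $X \subseteq V$, we have $\omega(X) \le \sum_{v \in X} d^+(v)$, and every edge counted on the right-hand side must be incident to $X$, so $\sum_{v \in X} d^+(v) \le |\{uv \in E : \{u,v\} \cap X \ne \emptyset\}|$.

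For sufficiency I would reduce to a bipartite matching problem. Build an auxiliary bipartite graph $B$ with one part $A = \{(v,i) : v \in V,\ 1 \le i \le \omega(v)\}$ (giving each vertex $v$ exactly $\omega(v)$ copies) and the other part equal to $E$, joining $(v,i)$ to $e \in E$ whenever $v$ is an endpoint of $e$. A matching of $B$ saturating $A$ corresponds exactly to the desired orientation: matching $(v,i)$ to $e = uv$ means orient $e$ as $v \to u$, which contributes one to $d^+(v)$. Since $v$ has $\omega(v)$ copies, saturating $A$ forces $d^+(v) \ge \omega(v)$ for every $v$; any edges of $H$ unmatched by this matching can be oriented arbitrarily. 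By Hall's theorem, the saturating matching exists if and only if $|N_B(Y)| \ge |Y|$ for every $Y \subseteq A$.

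Since all $\omega(v)$ copies of a fixed $v$ share the same neighbourhood in $B$ (namely, the edges of $H$ incident to $v$), Hall's condition is equivalent to its restriction to those $Y$ which are unions of full fibres $\{(v,1),\dots,(v,\omega(v))\}$: enlarging any violating $Y$ to contain every fibre it meets only increases $|Y|$ while leaving $N_B(Y)$ unchanged. For $Y = \{(v,i) : v \in X,\ 1 \le i \le \omega(v)\}$ we obtain $|Y| = \omega(X)$ and $N_B(Y) = \{uv \in E : \{u,v\} \cap X \ne \emptyset\}$, so the hypothesis of the theorem is precisely this restricted Hall condition. The only real obstacle is this fibre-closure reduction, which relies on the symmetry that all copies of a given vertex are interchangeable; once it is granted, sufficiency is a direct application of the bipartite matching theorem.
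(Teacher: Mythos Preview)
The paper does not prove this theorem; it is quoted from Frank and Gy\'arf\'as and used as a black box (see the proof of Lemma~\ref{omega}). So there is no ``paper's own proof'' to compare against, and the relevant question is simply whether your argument is correct. It is: the necessity direction is the obvious counting observation (each edge has a unique tail, so $\sum_{v\in X} d^+(v)$ counts distinct edges incident to $X$), and the sufficiency direction via the auxiliary bipartite graph and Hall's theorem is a standard and clean route to orientation results of this type. Your fibre-closure reduction is stated correctly: if some $Y$ violates Hall's condition, replacing $Y$ by the union of the full fibres it meets can only enlarge $|Y|$ while $N_B(Y)$ stays the same, so the enlarged set still violates. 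The translation of the saturating matching into an orientation is also sound, since each edge $e$ of $H$, viewed as a vertex of $B$, is matched to at most one copy $(v,i)$ and hence receives a well-defined tail.
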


Given $H = (V,E)$ and $\omega : V \rightarrow \mathbb{N}$, a \emph{good $\omega$-orientation} of $H$ is an orientation of $H$ such that $\forall v\in V, d^+(v) \ge \omega(v)$.
We prove some additional lemmas. 

\begin{lemm} \label{omega}
Let $H = (V,E)$ be a graph on $n \ge 1$ vertices and $m$ edges. Let $\omega : V \rightarrow \mathbb{N}$ such that $\omega(V) \le m$.
There exists a subgraph $S$ of $H$ with at least one vertex such that $S$ admits a good $\omega$-orientation.
\end{lemm}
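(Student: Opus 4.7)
The plan is to proceed by induction on the number of vertices $n = |V|$. For the base case $n = 1$, the hypothesis $\omega(V) \le m = 0$ forces $\omega \equiv 0$, so the single-vertex subgraph (with no edges) trivially admits a good $\omega$-orientation, since $d^+(v) = 0 \ge 0 = \omega(v)$.

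For the inductive step with $n \ge 2$, I would split on whether $H$ itself admits a good $\omega$-orientation. If it does, simply take $S = H$. If not, Theorem~\ref{gya} supplies a set $X \subseteq V$ with $\omega(X) > |\{uv \in E, u \in X\}|$, i.e., the weight on $X$ strictly exceeds the number of edges incident to $X$. The key observation is that $X$ cannot be all of $V$: otherwise one would have $\omega(V) > m$, contradicting the hypothesis of the lemma. Hence $V' := V \setminus X$ is nonempty.

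Consider then the induced subgraph $H' = H[V']$, whose edge set consists of precisely those edges of $H$ not incident to $X$, so that $m' := |E(H')| = m - |\{uv \in E, u \in X\}|$. Restricting $\omega$ to $V'$, one obtains
\[
\omega(V') = \omega(V) - \omega(X) \le m - \omega(X) < m - |\{uv \in E, u \in X\}| = m',
\]
so $H'$ still satisfies the hypothesis of the lemma (in fact with strict inequality). Since $|V'| < n$, the induction hypothesis delivers a nonempty subgraph $S$ of $H'$, which is a fortiori a subgraph of $H$, admitting a good $\omega$-orientation.

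The only genuinely delicate point is verifying that $X \ne V$, which is exactly where the hypothesis $\omega(V) \le m$ is used; once that is secured, Theorem~\ref{gya} does the real work of converting ``failure to orient'' into a set of vertices that can be safely discarded, and the induction essentially writes itself. I do not anticipate any further obstacle.
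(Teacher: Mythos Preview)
Your proof is correct and follows essentially the same approach as the paper: both arguments use Theorem~\ref{gya} to locate a ``bad'' set $X$ with $\omega(X) > e_H(X)$, pass to the induced subgraph on $V\setminus X$, and verify that the hypothesis $\omega \le m$ persists there. The only cosmetic difference is that the paper takes $X$ to be a \emph{maximum} bad set and argues directly (via maximality) that $H[V\setminus X]$ already satisfies the Frank--Gy\'arf\'as criterion, whereas you take an arbitrary bad $X$ and iterate by induction; the underlying arithmetic is identical.
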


\begin{proof}
For a graph $I$ and a set $X \subset V(I)$, let $e_I(X) = |\{uv \in E(I), u \in X\}|$. If $\omega(X) \le e_I(X)$, we say that $X$ is \emph{good} in $I$.

If every subset of $V$ is good in $H$, then by Theorem~\ref{gya}, we have a good $\omega$-orientation of $H$. Therefore we may assume that there is a subset of $V$ that is not good in $H$. Let $X$ be a maximum subset of $V$ that is not good in $H$. Let $Y = V - X$, and let $H' = H[Y]$. Note that $V$ is good in $H$, since $\omega(V) \le m$, so $Y \ne \emptyset$.

If every subset of $Y$ is good in $H'$, then by Theorem~\ref{gya}, we have a good $\omega$-orientation of $H'$. Therefore there is a $Z \subset Y$ such that $Z$ is not good in $H'$, i.e. $\omega(Z) > e_{H'}(Z)$. As $X$ is not good in $H$, we also have $\omega(X) > e_H(X)$. Therefore we have $\omega(X \cup Z) = \omega(X) + \omega(Z) > e_H(X) + e_{H'}(Z) = |\{uv \in E(H), u \in X\}| + |\{uv \in E(H'), u \in Z\}| = |\{uv \in E(H), u \in (X \cup Z)\}| = e_H(X \cup Z)$. Therefore $X \cup Z$ is good in $H$, which contradicts the maximality of $X$.

\end{proof}

We recall that $L$ is the light forest of $G$.
\begin{lemm} \label{graphH1}
Let $U$ be a non-empty subset of $V(L)$ with no small $3^+$-vertices. Let
$H=G[U]$ (i.e. the subgraph of $G$ induced by the 2-vertices and the
big vertices of $U\subseteq L$). Suppose:
\begin{enumerate}
\item There is an orientation of the edges of $H$ such that every
  2-vertex in $H$ has at least one out-going edge, and for all $i\ge
  1$, every big $(d+i+1)$-vertex in $G$  has at least $i$ out-going
  edges.
\item There are no $1^-$-vertices in $H$.
\end{enumerate}
Then $H$ contains an edge that is not in $L$ and that is incident to a
big vertex.
\end{lemm}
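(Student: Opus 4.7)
The plan is to argue by contradiction using the minimality of $G$. Assume $H$ contains no edge outside $L$ that is incident to a big vertex. I would first reduce this to the statement that $H$ has no big-big edge: for any 2-vertex $w \in U$ with $d_H(w) = 2$ and big $U$-neighbour $u$, $w$ is either a leaf (whose only big $G$-neighbour is its $L$-father, forcing $u$ to be that father) or a non-leaf 2-vertex (added to $L$ in Step~3, with both of its incident $G$-edges in $L$); in either case $uw \in L$. So the only possible edges of $H$ outside $L$ incident to a big vertex are big-big, and the contradiction assumption forbids these.

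Next I would apply minimality: $G-U$ admits a good $d$-partition $(I_0,F_0)$. I would then extend this partition to $G$ using the orientation $\sigma$ of $H$. The scheme is that every big $v \in U$ with an $I_0$-neighbour in $G-U$ is placed in $F$; this triggers the placement of $v$'s out-neighbours in $\sigma$ (all of which are 2-vertices, since there are no big-big edges) into $I$, giving at least $d_G(v)-d-1$ such $I$-placements per forced big, one less than what is needed to drive $v$'s $F$-degree down to $d$, and that last unit is supplied by the very $I_0$-neighbour of $v$ outside $U$. Remaining big vertices (with no $I_0$-neighbour outside $U$) are placed in $I$, unless some 2-vertex $U$-neighbour of such a $v$ was already claimed for $I$ by another big; in that case $v$ is also moved to $F$, and the iteration is repeated until a fixed point. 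Remaining 2-vertices go to $F$.

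The verification breaks into three checks: (i) $I$ is independent, using the absence of big-big edges for the big-big case, the in-degree-at-most-$1$ property at 2-vertices (implied by the orientation and $d_H(w)=2$) to rule out two adjacent claimed 2-vertices, and the iteration rule to prevent $I$-bigs adjacent to $I$-2-vertices; (ii) the $F$-degree at each $F$-big is $\le d$, via the $d_G(v)-d-1$ out-neighbour count plus one additional unit from either the $I_0$-neighbour (for initially forced bigs) or the triggering 2-vertex (for bigs added by iteration); (iii) every new cycle in $G[F]$ uses a 2-vertex of $U$, which is a 2-vertex of $G$, preserving the cycle condition. The resulting good $d$-partition of $G$ contradicts minimality, proving the lemma.

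The main obstacle will be handling the $F$-degree bound at vertices of $F_0$ that have several big $U$-neighbours pushed into $F$ by the extension, as these could blow past $d$. Resolving this likely requires either removing a more carefully chosen subset than all of $U$ at the minimality step (for instance only the 2-vertices of $U$, keeping the bigs in place) or locally modifying $(I_0,F_0)$ near such vertices; either way, the precise accounting between the orientation's out-degree bound, the in-degree-at-most-$1$ property at 2-vertices, and Lemma~\ref{kroot} (which constrains the tree structure of $L$) is where the hypothesis is used in an essential way.
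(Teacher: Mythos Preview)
Your overall strategy---argue by contradiction, invoke minimality, and extend the partial good $d$-partition using the given orientation to decide which $2$-vertices go to $I$---matches the paper's. Your first reduction (that under the contradiction hypothesis every big--to--$2$-vertex edge of $H$ lies in $L$, so only big--big edges could be the offending non-$L$ edges) is correct and is exactly how the paper reads the hypothesis. Your verification that two adjacent leaves cannot both be sent to $I$ (via the out-degree~$\ge 1$ constraint forcing the edge between them to be oriented both ways) is also essentially the paper's argument.

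The gap is precisely the one you flag, and your suggested repairs do not close it. Removing all of $U$ and nothing else leaves two problems. First, a big root $v\in U$ can have several neighbours outside $L$ (hence outside $U$); if such a neighbour $x$ lies in $F_0$ and is adjacent to several $U$-bigs you push into $F$, its $F$-degree blows up, and nothing in the orientation controls this. Second, a big $v\in U$ may have sons in $L\setminus U$ (leaves, non-leaf $2$-vertices, buds); these are already coloured by $(I_0,F_0)$, so you cannot freely ``descend'' through them or colour the buds below them, and your degree accounting at $v$ ignores them entirely. Your proposal to keep the bigs and remove only the $2$-vertices of $U$ does not address the second issue.

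The paper fixes both at once by choosing a different removal set. Let $R_0\subseteq U$ be the big vertices of $H$ that have no ancestor in $H$, and let $S$ be the set of all $L$-descendants of vertices of $H$ that are not themselves in $H$. The paper applies minimality to $G-\bigl(V(H-R_0)\cup S\bigr)$: it \emph{keeps} $R_0$ (so their colours, and the $F$-degrees of their neighbours outside $L$, are fixed by the partial partition and never violated later) and it \emph{also removes} $S$ (so every descendant of an $H$-vertex is uncoloured and can be handled by the ``descending'' and ``colouring the bud'' procedures defined earlier). The extension then proceeds top-down along $L$ from $R_0$; when a big $u$ is in $F$, the orientation is used exactly as you describe---the $\ge i$ out-edges of a $(d+i+1)$-vertex point to $2$-sons sent to $I$, and the extra $I$-neighbour is either the already-coloured neighbour outside $H$ (for $u\in R_0$) or the $2$-father that just put $u$ into $F$ (for bigs added during the iteration). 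This is the missing idea: the careful choice of the removal set, not a sharper local analysis, is what makes the accounting go through.
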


The graph $H$ of Lemma \ref{graphH1} is as follows: it is composed
by subtrees of $L$ plus some additional edges (that do not belong to
$L$). Such edges are edges between leaves, and maybe
edges between roots of trees of $L$. The aim of Lemma \ref{graphH1}
is to prove the existence of such latter edges.

The orientation in Lemma \ref{graphH1} does not correspond to the
orientation defined by the father/son relation. This orientation will
allow us to extend a partial partition $(I,F)$: consider a big
$(d+i+1)$-vertex $v$ being in $F$. Vertex $v$ must have at least $i+1$
neighbours in $I$. The orientation will point towards $i$ sons of $v$ that
will be added to $I$. Moreover we will see that $v$ will have one
extra neighbour in $I$: either its father in $L$, or a neighbour outside $L$.

\bigskip
\begin{proof}
Assume the lemma is false: every edge of $H$ that is not in $L$ is
between two 2-vertices. Let $R_0$ be the set of the vertices of $H$ that
are not the descendants in $L$ of a vertex of $H$. In particular, $R_0$
contains the roots of $L$ that are in $H$, plus big vertices that have no ancestor in $U$. Note that $R_0$ contains only big
vertices ; otherwise, $H$ would contain $1$-vertices. Moreover, $H - R_0$ has at least one vertex, otherwise $U$ would contain only big vertices, there would be an edge between two big vertices, and this edge could not be in $L$. Let $S$ be the set
of the vertices that are not in $H$, but are descendants of vertices
of $H$.

By minimality of $G$, the graph $G - (V(H-R_0) \cup S)$ admits a good
$d$-partition $(I,F)$. While there is a vertex $v\in R_0$ that is in $F$ and
has no neighbours in $I$, we put $v$ in $I$. Now we can assume that
every vertex in $R_0\cap F$ has a neighbour in $I$. Let $R=R_0$ (in
the following we describe a procedure that modifies $R$ but we need to
refer to vertices of $R_0$). While there is
a vertex in $R\setminus (F\cup I)$, do the following:

\begin{itemize}
\item Suppose $u$ is in $I$. We descend every 2-vertex with father $u$
  (by the procedure every 2-vertex is added to $F$) and colour every
  bud with father $u$. This leads to a good $d$-partition of $u$ and all
  its descendants.
\item Suppose $u$ is in $F$. For every 2-vertex $v$ in $H$ with father
  $u$ such that the edge $uv$ is oriented from $u$ to $v$, we
  first add $v$ to $I$, and then add the son of $v$ to $F$ and
  $R$. By hypothesis, if $u$ is a $(d+i+1)$-vertex, then it has at
  least $i$ outgoing edges. These edges lead to sons of $u$:
  \begin{itemize}
  \item either $u\in R_0$, thus $u$ has no ancestors in $H$ by
    construction and all its neighbours in $H$ are its sons ; moreover
    it has a neighbour outside $H$ that is in $I$ (by construction).
  \item or, $u \in R \setminus R_0$, this means that $u$ was added to
    $R$ during the procedure, this implies that his father, say $w$, is
    a 2-vertex added to $I$ and the edge $wv$ is oriented from $w$ to
    $v$ (as every 2-vertex has an out-going edge by hypothesis). It
    follows that all the out-going neighbours of $u$ are sons of $u$.
  \end{itemize}
  It follows that $u$ has at least $i+1$ neighbours in $I$, and so all
  other neighbours can be added in $F$ without violating the degree
  condition on $F$. Now we
  descend every 2-vertex $v\notin H$ with father $u$, and every
  $2$-vertex $v\in H$ with father $u$ such that the edge $uv$ is
  oriented from $v$ to $u$, and colour every bud with father $u$. The
  only problem that could occur is when two adjacent leaves $\ell$ and
  $\ell'$ are added to $I$: in that case, since $\ell$ and $\ell'$
  were added to $I$, the edge that links $\ell$ (resp. $\ell'$) to its
  father is towards $\ell$ (resp. $\ell'$); it follows that one of
  $\ell,\ell'$ has no out-going edges, contradicting the hypothesis.
\end{itemize}
In all cases, that leads to a good $d$-partition of $G$, a contradiction.
\end{proof}

\begin{lemm} \label{graphH2}
Let $U$ be a non-empty subset of $V(L)$ with no small $3^+$-vertices. Let
$H=G[U]$ (i.e. the subgraph of $G$ induced by the 2-vertices and the
big vertices of $U\subseteq L$). Suppose that $H$ has no edge linking
two roots of two connected components of $L$. Let us denote by
$n_2^G(H)$ the number of vertices of $H$ that are 2-vertices in
$G$. Then,
$$|E(H)| < n_2^G(H) + \sum_{big\ v\in H} \left( d_G(v) -d -1 \right).$$
\end{lemm}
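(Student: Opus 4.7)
The plan is to argue by contradiction. Suppose, toward a contradiction, that $|E(H)| \ge n_2^G(H) + \sum_{\text{big } v \in H}(d_G(v)-d-1)$, and I will produce an edge of $H$ linking two roots of two distinct connected components of $L$, contradicting the hypothesis. Define a weight function $\omega : V(H) \to \mathbb{N}$ by $\omega(v) = 1$ if $v$ is a $2$-vertex of $G$ and $\omega(v) = d_G(v) - d - 1$ if $v$ is big; in particular $\omega(v) \ge 1$ for every $v \in V(H)$, and the contradiction assumption reads $\omega(V(H)) \le |E(H)|$. Lemma \ref{omega} then yields a non-empty induced subgraph $S = G[U']$ of $H$ with $U' \subseteq U$ admitting a good $\omega$-orientation.

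I would next iteratively delete from $S$ any vertex of degree at most one. Because $\omega \ge 1$, the unique edge at a $1$-vertex must be oriented outward, so its deletion leaves every other vertex's out-degree unchanged, and the restriction of the orientation is still a good $\omega$-orientation of the smaller induced subgraph. Moreover $|E(S)| \ge \omega(V(S)) \ge |V(S)|$, so $S$ contains a cycle, whose vertices always have degree at least $2$ and therefore survive all deletions. The process thus terminates at a non-empty induced subgraph $S' = G[U'']$ that has no $1^-$-vertex, still carries a good $\omega$-orientation, and whose vertex set is still a subset of $V(L)$ free of small $3^+$-vertices. I would then apply Lemma \ref{graphH1} to $S'$, obtaining an edge $e = vu$ of $S'$ that is not in $L$ and is incident to a big vertex, say $v$.

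It remains to derive the sought contradiction via two claims: (i) $u$ is also big, and (ii) both $v$ and $u$ are roots of their respective components of $L$. For (i), if $u$ were a $2$-vertex of $G$ in $V(L)$, then $u$ would be either a leaf of $L$ (whose unique big neighbour is its father in $L$) or a vertex introduced in Step~3 of the construction of $L$ (whose two $G$-neighbours are its father and its unique son in $L$); in either situation every edge of $G$ incident to $u$ lies in $L$, contradicting $e \notin L$. For (ii), the key structural fact is that any non-root big vertex $w$ received its father via Step~3, which fires only when $w$ already has $d_G(w)-1$ sons in $L$; together with the newly added father this accounts for all $d_G(w)$ neighbours of $w$ in $G$, and a short inspection shows no later step of the construction of $L$ can attach a further son to $w$. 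Hence every $G$-edge at a non-root big vertex lies in $L$, so neither endpoint of the non-$L$ edge $e$ can be non-root; since $v \ne u$ they must be roots of distinct components, and $e$ is the desired edge. The main obstacle will be the careful bookkeeping of the pruning step---preserving both the induced-subgraph status and the good $\omega$-orientation while leaving a non-empty graph---together with the structural verification behind (ii).
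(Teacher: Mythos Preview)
Your proposal is correct and follows essentially the same route as the paper: assume the inequality fails, define $\omega$, invoke Lemma~\ref{omega} to obtain a subgraph with a good $\omega$-orientation, iteratively strip degree-$1$ vertices while preserving the orientation, and then apply Lemma~\ref{graphH1} to produce a non-$L$ edge at a big vertex, which must join two roots. You are in fact more careful than the paper in two spots---the cycle argument guaranteeing the pruning terminates at a non-empty (induced) subgraph, and the explicit verification of claims (i) and (ii)---and both additions are sound.
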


\begin{proof}
  By contradiction, suppose there exists such a subgraph $H$ with
  $|E(H)| \ge n_2^G(H) + \sum_{big\ v\in H} \left( d_G(v) -d -1
  \right)$. Let us define a weight function $\omega:V(H)\to
  \mathbb{N}$ such that, for every 2-vertex $u$, $\omega(u)=1$ and,
  for every $(d+i+1)$-vertex $v$, $\omega(v)=i$. By hypothesis,
  $|E(H)|\ge \sum_{v\in V(H)} \omega(v)$. By Lemma \ref{omega}, $H$ contains a
  subgraph $S$ that has a good $\omega$-orientation.

  Suppose that $S$ has a vertex of degree 1, say $v$, and let $u$ be
  the neighbour of $v$ in $S$. As $\omega(v)\ge 1$, the only edge
  incident to $v$ goes from $v$ to $u$. It follows that, for all
  $w\neq v$, $w$ has the same number of outgoing edges in $S$ and in
  $S-\{v\}$. Hence $S-\{v\}$ is a subgraph of $H$ with at least one
  vertex (it contains $u$) and has a good $\omega$-orientation. By
  successively removing vertices of degree 1 from $S$, we can assume
  that $S$ is a subgraph of $H$ that has at least one vertex and that
  admits a good $\omega$-orientation.

  By Lemma \ref{graphH1}, $S$ has an edge $e$ that is not in $L$ and is incident
  to a big vertex. As no leaf of $L$ is adjacent to a big vertex
  except its father, edge $e$ has to link the roots of two connected
  components of $L$, contradicting the hypothesis.
\end{proof}

Let $\widehat L$ be the graph induced by $V(L)$, where we remove every edge that links the roots of two connected components of $L$ and every bud. An \emph{internal} $2$-vertex is a $2$-vertex in $\widehat L$ that has its two neighbours in $\widehat L$. By applying Lemma~\ref{graphH2} to $\widehat L$, we can bound the number of internal $2$-vertices in $\widehat L$. We obtain the following lemma:

\begin{lemm} \label{intern}
The number of internal $2$-vertices is at most $2\sum_{big v \in H} (d_G(v)-d-1$).
\end{lemm}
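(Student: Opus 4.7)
The plan is to apply Lemma~\ref{graphH2} to $H = \widehat L$; if $\widehat L$ is empty, then $n_2^{\mathrm{int}} = 0$ and the bound is trivial, so assume $\widehat L$ is non-empty. By construction $V(\widehat L) \subseteq V(L)$ contains no small $3^+$-vertex (all such lie in buds, which were deleted), and $\widehat L$ has no edge linking two roots of components of $L$. Writing $B = \sum_{\mathrm{big}\ v \in \widehat L}(d_G(v) - d - 1)$ and $n_2 = n_2^G(\widehat L)$, Lemma~\ref{graphH2} gives
\[ |E(\widehat L)| < n_2 + B. \]

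The next step is a double-count that turns this edge bound into the desired vertex bound. Let $E_{bb}$, $E_{b2}$, $E_{22}$ denote the number of edges of $\widehat L$ with both, one, or zero endpoints at big vertices. Each internal $2$-vertex has $\widehat L$-degree exactly $2$: both its $G$-neighbours lie in $V(\widehat L)$, and none of its incident edges is of root-root type since a $2$-vertex is never a root of $L$. Each non-internal $2$-vertex, on the other hand, is necessarily a leaf of $L$ whose small $G$-neighbour lies outside $\widehat L$ (a small $3^+$-vertex contained in a bud or outside $L$ altogether), so it has $\widehat L$-degree exactly $1$. Summing degrees at $2$-vertices gives $E_{b2} + 2E_{22} = 2n_2^{\mathrm{int}} + n_2^{\mathrm{non}}$. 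Substituting into $|E(\widehat L)| = E_{bb} + E_{b2} + E_{22} < n_2 + B$ and cancelling $n_2^{\mathrm{non}}$ yields
\[ E_{bb} + n_2^{\mathrm{int}} - E_{22} < B. \]

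To finish, I would invoke Lemma~\ref{2v}: every $2$-vertex has at least one big neighbour, so each $2$-vertex is incident to at most one $2$-$2$ edge. In particular, both endpoints of a $2$-$2$ edge are internal, and each internal $2$-vertex lies on at most one such edge, giving $2E_{22} \le n_2^{\mathrm{int}}$. Combined with $E_{bb} \ge 0$, the displayed inequality becomes $n_2^{\mathrm{int}}/2 \le n_2^{\mathrm{int}} - E_{22} < B$, i.e.\ $n_2^{\mathrm{int}} < 2B$, which is the claimed bound.

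The main obstacle is noticing that the naive estimate $n_2^{\mathrm{int}} \le |E(\widehat L)| < n_2 + B$ is too weak by roughly a factor of $2$; the improvement requires exploiting simultaneously that internal $2$-vertices contribute degree $2$ to $|E(\widehat L)|$ (not $1$) and that $2$-$2$ edges are scarce by Lemma~\ref{2v}.
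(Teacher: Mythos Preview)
Your proof is correct and uses the same two ingredients as the paper: Lemma~\ref{graphH2} applied to $\widehat L$ to get $|E(\widehat L)| < n_2 + B$, followed by Lemma~\ref{2v} to convert this into the bound $n_2^{\mathrm{int}} < 2B$. The only difference is cosmetic: the paper performs the second step by contracting each internal $2$-vertex (and deleting each non-internal one) to obtain a multigraph $\widehat L'$ on the big vertices with $|E(\widehat L')| = |E(\widehat L)| - n_2 < B$, and then observes via Lemma~\ref{2v} that each edge of $\widehat L'$ absorbs at most two internal $2$-vertices; you reach the identical conclusion by the direct double-count $E_{b2} + 2E_{22} = 2n_2^{\mathrm{int}} + n_2^{\mathrm{non}}$ together with $2E_{22} \le n_2^{\mathrm{int}}$.
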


\begin{proof}
Let $\widehat L'$ be the graph $\widehat L$ where every $2$-vertex is removed in the following way: if $v$ is an internal $2$-vertex with neighbours $u$ and $w$, then we remove $v$ and add an edge from $u$ to $w$, and we iterate. Note that $\widehat L'$ may have multiple edges and even loops. As for each $2$-vertex that was removed, exactly one edge was removed, the number of edges in $\widehat L'$ is at most $\sum_{big\ v\in H} \left( d_G(v) -d -1 \right)$. By Lemma~\ref{2v}, every edge of $\widehat L'$ corresponds to at most two internal $2$-vertices. Therefore there are at most $2\sum_{big v \in H} (d_G(v)-d-1$) internal $2$-vertices.
\end{proof}

\subsection*{Discharging procedure}
Let $\epsilon = 3 - M$ (recall that $\frac{8}{3} \le M \le 3$). Recall that $d \ge \frac{1}{3-M} = \frac{1}{\epsilon}$, therefore $\epsilon \ge \frac{1}{d} > 0$.

We assign to each $k$-vertex a charge equal to $k - M = k-3 + \epsilon$. Note that since $M$ is bigger than the average degree of $G$, the sum of the charges of the vertices is negative.

Every $3^+$-vertex has a charge of at least $\epsilon > 0$. Therefore every vertex that has a negative charge is a $2$-vertex and has charge $\epsilon - 1$. We will redistribute the weight from the $3^+$-vertices to the $2$-vertices, in order to obtain a non-negative weight on each vertex, by the following three steps:

\begin{enumerate}
\item
Let $S$ be a maximal set of small $3^+$-vertices such that $G[S]$ is connected. Let $S_2$ be a set of $2$-vertices that have exactly one (by Lemma~\ref{2v}) neighbour in $S$. Note that since $\epsilon \le 1$, every $k$-vertex in $S$ has charge at least $(k-2)\epsilon$. The vertices in $S$ give $\epsilon$ to each of the vertices in $S_2$. 

Suppose that the total charge of $S$ becomes negative. This implies that the number of vertices in $S_2$ is more than $\sum_{v \in S}(d(v)-2)$. Therefore there are at most $|S|-1$ edges in $G[S]$, and thus $G[S]$ is a tree. Now by Lemma~\ref{nobud}, there is at least one big vertex outside of $S$ that has a neighbour in $S$. Note that if there are at least two of these vertices, or if one of them has at least two neighbours in $S$, then one can observe that $G[S]$ has at most $|S|-2$ edges, contradicting the connectivity of $G[S]$. Therefore $S$ is a bud. In this case, $S$ ends up with a charge of at least $-\epsilon$. We will, in Step $2$, make sure that every son of a big vertex in $L$ receives at least $1 - 2\epsilon \ge \epsilon$ (since $\epsilon \le \frac{1}{3}$) from its father, and this will ensure that every bud ends up with a non-negative charge.

\item
For every big vertex $v$, $v$ gives $1 - 2\epsilon$ to each of its sons, does not give anything to its father (if it has one), and gives $\frac{1 - \epsilon}{2}$ to its other $2$-neighbours. Additionally, every big $k$-vertex gives $2(k - d - 1)\epsilon$ to a common pot.

\item The common pot gives $\epsilon$ to every internal $2$-vertex.
\end{enumerate}

\begin{lemm}\label{nneg2}
Every vertex has non-negative charge at the end of the procedure.
\end{lemm}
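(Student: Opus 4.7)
The plan is to verify that after the three-step discharging each of the following has non-negative total charge: every maximal connected set $S$ of small $3^+$-vertices, every $2$-vertex, every big vertex, and the common pot. Since Step 1 only redistributes between pairs of the form (small $3^+$-vertex, $2$-vertex), Step 2 only moves charge from big vertices outward, and Step 3 only from the pot to internal $2$-vertices, each atom's final charge is a small sum of local contributions.

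For a set $S$ I follow the sketch already present in the text. The bound $k-3+\epsilon\ge(k-2)\epsilon$ (valid since $\epsilon\le 1$) shows the initial charge of $S$ is at least $\epsilon\sum_{v\in S}(d_G(v)-2)$, and Lemma~\ref{2v} (every $2$-vertex has at most one small neighbour) makes the Step~1 loss exactly $\epsilon|S_2|$. Splitting the degree sum into edges inside and edges leaving $S$, the charge of $S$ after Step~1 is at least $\epsilon\bigl(2|E(G[S])|+(\text{edges from }S\text{ to big vertices})-2|S|\bigr)$. This is non-negative unless $G[S]$ is a tree with exactly one edge to big vertices, which by Lemma~\ref{nobud} means $S$ is a bud and the deficit is exactly $\epsilon$. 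That deficit is paid back in Step~2: the father of the bud sends $1-2\epsilon$ to the bud root, and $1-2\epsilon\ge\epsilon$ since $\epsilon\le\frac{1}{3}$. The pot is handled analogously: it receives $2\epsilon\sum_{v\text{ big}}(d_G(v)-d-1)$ and spends $\epsilon$ per internal $2$-vertex, so Lemma~\ref{intern} directly yields non-negativity.

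For the $2$-vertices I do three cases. A leaf $u$ with small neighbour $s$ and big neighbour $w$ gathers $\epsilon$ from $s$'s component in Step~1 and $1-2\epsilon$ from $w$ (as a son) in Step~2, for a total of $\epsilon-1+\epsilon+(1-2\epsilon)=0$. A non-leaf $2$-vertex $u\notin L$ has two big neighbours and is neither son nor father of either, so it pockets $\frac{1-\epsilon}{2}$ from each in Step~2, ending at $\epsilon-1+(1-\epsilon)=0$. A non-leaf $2$-vertex $u\in L$ was added by Step~3 of the $L$-construction, so one of its two big neighbours is its father (sending $1-2\epsilon$) and the other is its son (sending nothing); but $u$ and both its neighbours are in $V(\widehat L)$, making $u$ internal, so $u$ also collects $\epsilon$ from the pot in Step~3 and again ends at $0$.

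The main obstacle is the big-vertex case. Let $v$ be a big $k$-vertex, write $s$ for the number of sons of $v$ in $L$ and $a$ for the number of non-son, non-father $2$-neighbours of $v$. Lemma~\ref{kroot} gives $s\le k-1$, and Step~3 of the construction of $L$ forbids $s=k-1$ together with a $2$-vertex outside $L$ as the last neighbour of $v$, so $s=k-1$ forces $a=0$. Since $1-2\epsilon\ge\frac{1-\epsilon}{2}$ (equivalently $\epsilon\le\frac{1}{3}$), a short integer optimization shows that the charge $s(1-2\epsilon)+a\cdot\frac{1-\epsilon}{2}$ sent to $v$'s $2$-neighbours is maximized at $s=k-2,\ a=2$ rather than at $s=k-1,\ a=0$, with maximum value $(k-2)(1-2\epsilon)+(1-\epsilon)$. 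Combining this with the pot contribution $2(k-d-1)\epsilon$ and the initial charge $k-3+\epsilon$ gives a final charge of exactly $2(d\epsilon-1)$, which is non-negative precisely because $d\ge\frac{1}{3-M}=\frac{1}{\epsilon}$; this is the only place where the numerical hypothesis on $d$ is actually used.
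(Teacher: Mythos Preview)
Your argument follows the paper's approach and is essentially correct, but there is one case missing in your treatment of $2$-vertices. In your first case you take a leaf $u$ with small neighbour $s$ and claim $u$ receives $\epsilon$ from ``$s$'s component'' in Step~1. This is only valid when $s$ is a small $3^+$-vertex; the small neighbour of a leaf can itself be a $2$-vertex, and then Step~1 gives $u$ nothing. The fix is immediate: if $s$ is a $2$-vertex then $s$ is also a leaf, both $u$ and $s$ lie in $L$ (hence in $\widehat L$, since neither is in a bud), so $u$ is an internal $2$-vertex and collects its missing $\epsilon$ from the common pot in Step~3 instead. In other words, the correct split for $2$-vertices of $L$ is ``internal'' versus ``non-internal'' (this is how the paper phrases it), and your third case should cover all internal $2$-vertices, not only the non-leaf ones.

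Two minor points of wording: the bud deficit after Step~1 is \emph{at most} $\epsilon$, not exactly $\epsilon$, since your lower bound $k-3+\epsilon\ge(k-2)\epsilon$ on the initial charge is strict for $k\ge 4$; and the big-vertex final charge is \emph{at least} $2(d\epsilon-1)$, not exactly, since you maximized the outgoing charge over all admissible $(s,a)$ rather than computing it for the particular vertex.
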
 

\begin{proof}
Note that by what precedes every small $3^+$-vertex $v$ has non-negative charge.

Every $2$-vertex that is not in $L$ receives $\frac{1 - \epsilon}{2}$ from each of its neighbours. Every leaf that is not adjacent to a $2$-vertex (i.e. every $2$-vertex of $L$ that is not an internal $2$-vertex) receives $1-2\epsilon$ from its father and $\epsilon$ from its other neighbour (which is a small $3^+$-vertex). Every internal $2$-vertex  receives $1-2\epsilon$ from its father and $\epsilon$ from the common pot. Therefore every $2$-vertex has charge $0$ at the end of the procedure.

Let us prove that every big vertex has non-negative charge at the end of the procedure. Let $v$ be a big $k$-vertex. Let $c(v)$ be the initial charge of $v$, and $c'(v)$ be the final charge of $v$. Suppose by contradiction that $c'(v)<0$. By Lemma~\ref{kroot}, vertex $v$ has at most $k-1$ sons. Moreover, if $v$ has $k-1$ sons, then its last neighbour is either the father of $v$, or a $3^+$-vertex (by construction of $L$).  Recall that $\epsilon \le \frac{1}{3}$, therefore $1 - 2\epsilon \ge \frac{1-2\epsilon}{2}$.
If $v$ has $k-1$ sons, then $v$ gives $1-2\epsilon$ to each of its $k-1$ sons, and $2(k - d - 1)\epsilon$ to the common pot, therefore $c'(v) = c(v) - (1-2\epsilon)(k-1) + 2(k - d - 1)\epsilon$, and thus as $c'(v) < 0$, we have $c(v) < (1-2\epsilon)(k-2) + 1-\epsilon + 2(k - d - 1)\epsilon$. If $v$ has $k-2$ sons, then it gives $1-2\epsilon$ to each of its $k-2$ sons, and may give at most $\frac{1 - \epsilon}{2}$ to its two other neighbours and $2(k - d - 1)\epsilon$ to the common pot, therefore $c(v) < (1-2\epsilon)(k-2) + 1-\epsilon + 2(k - d - 1)\epsilon$. If we decrease the number of sons of $v$ further than $k-2$, we will still have $c(v) < (1-2\epsilon)(k-2) + 1-\epsilon + 2(k - d - 1)\epsilon$.

Thus if $c(v) \ge (1-2\epsilon)(k-2) + 1-\epsilon + 2(k - d - 1)\epsilon$, we get a contradiction. Recall that $c(v)$ is equal to $k-3 + \epsilon$.
Therefore we only need to prove that $k-3 + \epsilon\ge (1-2\epsilon)(k-2) + 1-\epsilon + 2(k - d - 1)\epsilon$, which is equivalent to $d \ge \frac{1}{\epsilon}$.
\end{proof}

Let us prove that the common pot also has non-negative charge at the end of the procedure. It receives charge $\sum_{v \text{ big}}2(d(v) - d - 1)\epsilon$. By Lemma~\ref{intern}, this charge is at least $\epsilon$ times the number of internal $2$-vertices. The common pot gives $\epsilon$ to each internal $2$-vertex, therefore it has non-negative charge at the end of the procedure.

By Lemma~\ref{nneg2}, every vertex has non-negative charge at the end of the procedure, thus the sum of the charges at the end of the procedure is non-negative. Since no charge was created nor removed, and since the common pot also has non-negative charge, this is a contradiction with the fact that the initial sum of the charges is negative. That ends the proof of Theorem~\ref{t-main2}.

\bibliographystyle{plain}
\bibliography{biblio} {}

\end{document}